\newcommand{\R}{\mathbb{R}}
\newcommand{\C}{\mathbb{C}}
\newcommand{\Z}{\mathbb{Z}}
\newcommand{\cI}{\mathcal{I}}
\newcommand{\cC}{\mathcal{C}}
\newcommand{\cA}{\mathcal{A}}
\newcommand{\fS}{\mathfrak{S}}
\newcommand{\bra}[1]{\la #1|}
\newcommand{\ket}[1]{| #1 \ra}
\renewcommand{\vec}[1]{\boldsymbol{#1}}
\newcommand\numberthis{\addtocounter{equation}{1}\tag{\theequation}}
\newcommand{\la}{\langle}
\newcommand{\ra}{\rangle}
\DeclareMathOperator{\range}{range}
\DeclareMathOperator{\tr}{tr}
\DeclareMathOperator{\supp}{supp}
\DeclareMathOperator{\rank}{rank}
\newcommand{\dd}{\,\ensuremath{\textrm{d}}}
\newcommand{\comment}[1]{}
\definecolor{purp}{RGB}{160, 32, 240}
\newtheorem{theorem}{Theorem}
\newtheorem*{conjecture*}{Conjecture}
\newtheorem{lemma}{Lemma}[section]
\newtheorem*{lemma*}{Lemma}
\newtheorem{proposition}[lemma]{Proposition}
\newtheorem*{proposition*}{Proposition}
\theoremstyle{definition}
\newtheorem{definition}{Definition}
\title[Algebraic localization implies Chern triviality in non-periodic insulators]{Algebraic localization of Wannier functions implies Chern triviality in non-periodic insulators}
\author{Jianfeng Lu}
\address{(JL) Department of Mathematics, Department of Physics, and Department of Chemistry, Duke University, Box 90320, Durham, NC 27708, USA}
\email{jianfeng@math.duke.edu}
\author{Kevin D. Stubbs}
\address{(KDS) Department of Mathematics, Duke University, Box 90320, Durham, NC 27708, USA}
\email{kstubbs@math.duke.edu}
\date{\today}
\thanks{This work is supported in part by the U.S.~National Science Foundation via grant DMS-2012286 and the U.S.~Department of Energy via grant DE-SC0019449. K.D.S. was supported in part by a National Science Foundation Graduate Research Fellowship under Grant No. DGE-1644868.}
\begin{document}
\begin{abstract}
For gapped periodic systems (insulators), it has been established that the insulator is topologically trivial (i.e., its Chern number is equal to $0$) if and only if its Fermi projector admits an orthogonal basis with finite second moment (i.e., all basis elements satisfy $\int |\vec{x}|^2 |w(\vec{x})|^2 \dd{\vec{x}} < \infty$). In this paper, we extend one direction of this result to non-periodic gapped systems. In particular, we show that the existence of an orthogonal basis with slightly more decay ($\int |\vec{x}|^{2+\epsilon} |w(\vec{x})|^2 \dd{\vec{x}} < \infty$ for any $\epsilon > 0$) is a sufficient condition to conclude that the Chern marker, the natural generalization of the Chern number, vanishes.
\end{abstract}

\maketitle

\section{Introduction}
In electron structure theory, we are often interested in studying the subspace of low energy states spanned by the range of Fermi projector $P$. For numerical and theoretical purposes, we are in particular interested in finding a basis for the occupied space $\range{(P)}$ which is as well localized in space as possible. The elements of such a basis are known as Wannier functions or generalized Wannier functions (see review \cite{2012MarzariMostofiYatesSouzaVanderbilt} and references therein). Typically for insulating materials, the Fermi projector $P$ admits an integral kernel which is exponentially localized in the following sense (see, for example \cite{2020MarcelliMoscolariPanati}):
\begin{equation}
\label{eq:intro-p}
|P(\vec{x},\vec{y})| \lesssim e^{-c_{gap} |\vec{x} - \vec{y}|}.
\end{equation}
Therefore, we might expect that these insulators admit a basis which decays exponentially quickly in space. Somewhat surprisingly, even if $P$ satisfies an estimate like Equation \eqref{eq:intro-p}, it is not necessarily true that $\range{(P)}$ admits a basis which decays exponentially quickly in space due to the existence of so called ``topological obstructions''.

In two dimensional periodic insulators, it is now well understood \cite{2007BrouderPanatiCalandraMourougane, 2007Panati, 2018MonacoPanatiPisanteTeufel}  that the existence of a well localized basis for $\range{(P)}$ is fully characterized by the Chern number which is defined as follows:
%In insulating, periodic systems it is now well understood that in two dimensions whether or not $\range{(P)}$ admits an exponentially localized basis is tied to the theory of topological invariants. In particular, the topology is characterized by the Chern number, given in $2D$ as 
\begin{equation*}
    c(P) = \frac{1}{2\pi} \int_{\mathcal{B}} \tr\Bigl( P(\vec{k}) \bigl[\partial_{k_1} P(\vec{k}), \partial_{k_2} P(\vec{k}) \bigr] \Bigr) d k_1 \wedge d k_2,
\end{equation*}
where $\mathcal{B}$ is the first Brillouin zone and $P(\vec{k})$ is the Bloch decomposition of $P$ (see e.g., \cite{ReedSimoniv}).

For periodic systems, $P$ possesses a basis with finite second moment (known as Wannier functions) if and only if $c(P) = 0$, as established in  \cite{2018MonacoPanatiPisanteTeufel}. Furthermore, $c(P) = 0$ if and only if there exists a basis of $\range{(P)}$ which is exponentially localized \cite{2007BrouderPanatiCalandraMourougane}. These results, which connect the existence of a basis with finite second moment to the vanishing of the Chern number and to the existence of an exponentially localized basis, is known as the localization dichotomy in periodic insulators.

Since the notion of the Chern number depends on the Bloch decomposition, the Chern number is no longer well defined for non-periodic systems. For generic systems, the Chern marker was proposed in \cite{2018CorneanMonacoMoscolari,2019MarcelliMonacoMoscolariPanati} as an extension.
\begin{definition}[Chern Marker]
Let $P$ be a projection on $L^2(\R^2)$ and $\chi_L$ be the indicator function of the set $[-L,L)^2$.  The \textbf{Chern marker} of $P$ is defined by
  \[
    C(P) := \lim_{L \rightarrow \infty} \frac{2 \pi i}{4L^2} \tr{\left( \chi_{L} P \Big[ [ X, P ], [ Y, P ] \Big] P \chi_{L}\right)}
  \]
  whenever the limit on the right hand side exists.
\end{definition}
Note that this generalizes the Chern number as for periodic systems the Chern number and the Chern marker agree \cite{2019MarcelliMonacoMoscolariPanati,2020MarcelliMoscolariPanati}. Therefore, parallel to the periodic case, it is conjectured that the Chern marker characterizes the existence of localized Wannier basis for gapped systems \cite{2019MarcelliMonacoMoscolariPanati,2020MarcelliMoscolariPanati}. 
Before continuing to state the conjecture more precisely and state the main result of this paper, which confirms the conjecture in one direction, let us start by making some definitions:
\begin{definition}
\label{def:exp-loc-kern}
  Suppose that $A$ is a bounded linear operator on $L^2(\R^d) \rightarrow L^2(\R^d)$. We say that $A$ admits an \textit{exponentially localized kernel} with decay rate $\gamma$, if $A$ admits an integral kernel $A(\cdot, \cdot) : \R^d \times \R^d \rightarrow \C$ and there exists a finite, positive constant $C$ so that:
\[
    |A(\vec{x}, \vec{x}')| \leq C e^{-\gamma | \vec{x} - \vec{x}'|} \quad a.e.
\]
\end{definition}

\begin{definition}[$s$-localized generalized Wannier basis]
\label{def:s-loc}
Given an orthogonal projector $P$, we say an orthonormal basis $\{ \psi_{\alpha} \}_{\alpha \in \cI} \subseteq L^2(\R^2)$ is an \textit{$s$-localized generalized Wannier basis} for $P$ for some $s > 0$ if:
\begin{enumerate}
\item The collection $\{ \psi_{\alpha} \}_{\alpha \in \cI}$ spans $\range{(P)}$,
\item There exists a finite, positive constant $C$ and a collection of points $\{ \vec{\mu}_{\alpha} \}_{\alpha \in \cI} \subseteq \R^2$ such that for all $\alpha \in \cI$
\[
\int_{\R^2} \la \vec{x} - \vec{\mu}_{\alpha} \ra^{2s} |\psi_{\alpha}(\vec{x})|^2 \dd{\vec{x}} \leq C,
\]
where $\la \vec{x} - \vec{\mu}_{\alpha} \ra := ( \lvert\vec{x} - \vec{\mu}_{\alpha}\rvert^2 + 1)^{1/2}$ is the Japanese bracket. 
\end{enumerate}
We refer to the collection $\{ \vec{\mu}_{\alpha} \}_{\alpha \in \cI}$ as the \textit{center points} of the basis $\{ \psi_{\alpha} \}_{\alpha \in \cI}$.
\end{definition}
With these definitions, the localization dichotomy conjecture for non-periodic systems is as follows:
\begin{conjecture*}[Localization dichotomy for non-periodic gapped systems]
Let $P$ be an orthogonal projector which admits an exponentially localized kernel. Then the following statements are equivalent:
  \begin{enumerate}[label=(\alph*)]
  \item $P$ admits a generalized Wannier basis that is exponentially localized.
  \item $P$ admits a generalized Wannier basis that is $s$-localized for $s = 1$.
  \item $P$ is topologically trivial in the sense that its Chern marker $C(P)$ exists and is equal to zero.
  \end{enumerate}
\end{conjecture*}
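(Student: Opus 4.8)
The plan is to establish the direction \emph{$P$ admits an $s$-localized generalized Wannier basis for some $s>1$ $\Longrightarrow$ $C(P)$ exists and equals $0$}, i.e.\ the main theorem (a slight strengthening of (b) $\Rightarrow$ (c)).

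\textbf{Step 1 (algebraic reduction).} First I would record the purely algebraic identity
\[
  P\bigl[\,[X,P],[Y,P]\,\bigr]P = [\,PXP,\;PYP\,],
\]
obtained by writing $[X,P]=(1-P)XP-PX(1-P)$ (and similarly for $[Y,P]$) and using that the multiplication operators $X,Y$ commute; it is verified by pairing against the $\psi_\alpha$, on which all position-operator terms are defined since $2s>2$. Because $[X,P]$ has integral kernel $(x_1-y_1)P(\vec x,\vec y)$, exponential localization of $P$ makes $[X,P]$, and hence $Q:=P[[X,P],[Y,P]]P$, a bounded operator with an exponentially localized kernel; in particular $\chi_L Q\chi_L$ is trace class and, with $\{\psi_\alpha\}$ the given basis,
\[
  C(P)=\lim_{L\to\infty}\frac{2\pi i}{4L^2}\,\tr(\chi_L Q\chi_L),\qquad
  \tr(\chi_L Q\chi_L)=\sum_{\alpha,\beta\in\cI}M_{\alpha\beta}\,\la\psi_\beta|\chi_L\psi_\alpha\ra,
\]
where $M_{\alpha\beta}$ is the matrix of $Q|_{\range(P)}$ in the orthonormal basis $\{\psi_\alpha\}$; equivalently $M=[\widehat X,\widehat Y]$ with $\widehat X=(\la\psi_\alpha|X\psi_\beta\ra)$, $\widehat Y$ analogous, so $M$ is anti-Hermitian. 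The goal becomes to show $\tr(\chi_L Q\chi_L)=o(L^2)$, which simultaneously gives existence of the limit and the value $0$.

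\textbf{Step 2 (Wannier estimates).} The one place where $s>1$ (rather than $s=1$) is genuinely used is the off-diagonal decay
\[
  |\la\psi_\alpha|X\psi_\beta\ra|,\ |\la\psi_\alpha|Y\psi_\beta\ra|\ \lesssim\ \la\vec\mu_\alpha-\vec\mu_\beta\ra^{\,1-s}\qquad(\alpha\neq\beta),
\]
proved by replacing $X$ with $X-(\vec\mu_\alpha)_1$ inside the matrix element (legitimate since $\la\psi_\alpha|\psi_\beta\ra=0$), splitting the integral according to whether $|\vec x-\vec\mu_\alpha|$ or $|\vec x-\vec\mu_\beta|$ exceeds $\tfrac12|\vec\mu_\alpha-\vec\mu_\beta|$, and applying Cauchy--Schwarz with the $2s$-moment bound. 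I would also need a finite-density estimate $\#\{\beta:|\vec\mu_\beta-\vec\mu_\alpha|\le R\}\lesssim\la R\ra^{2}$ (from orthonormality, uniform localization, and $\tr(\chi_B P\chi_B)\le C|B|$, which follows from exponential localization), together with the elementary bounds $|\la\psi_\beta|(1-\chi_L)\psi_\alpha\ra|\le\|(1-\chi_L)\psi_\alpha\|\,\|(1-\chi_L)\psi_\beta\|$ and $\|(1-\chi_L)\psi_\alpha\|\lesssim\la\dist(\vec\mu_\alpha,\partial[-L,L)^2)\ra^{-s}$ when $\vec\mu_\alpha\in[-L,L)^2$, with the mirror statement outside.

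\textbf{Step 3 (cancellation, and the main obstacle).} The ``bulk'' of the trace cancels: writing $M_{\alpha\alpha}=\sum_\gamma 2i\,\imag(\overline{\la\psi_\gamma|X\psi_\alpha\ra}\,\la\psi_\gamma|Y\psi_\alpha\ra)=2i\,\imag\la X\psi_\alpha|PY\psi_\alpha\ra$, the summand is antisymmetric under $\alpha\leftrightarrow\gamma$; since $\cI_{\mathrm{in}}:=\{\alpha:\vec\mu_\alpha\in[-L,L)^2\}$ is finite, the ``interior--interior'' part of $\sum_{\alpha\in\cI_{\mathrm{in}}}M_{\alpha\alpha}$ vanishes, leaving an ``interior--exterior'' sum $\sum_{\alpha\in\cI_{\mathrm{in}},\,\gamma\notin\cI_{\mathrm{in}}}(\cdots)$ supported near $\partial[-L,L)^2$; the remaining (off-diagonal in $\alpha,\beta$) contributions to $\tr(\chi_L Q\chi_L)$ are boundary terms through the weights of Step 2. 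It then remains to show these boundary contributions are $o(L^2)$, and this is the hard part. The decay rate $s-1$ is weak, and pure absolute-value bookkeeping only delivers $o(L^2)$ for $s>2$ (and merely $O(L^2)$ for $1<s\le 2$). To reach all $s>1$ one must retain the cancellation: use that only $\imag(\overline{\la\psi_\gamma|X\psi_\alpha\ra}\,\la\psi_\gamma|Y\psi_\alpha\ra)$ enters (anti-Hermiticity of $M$), perform a summation-by-parts/telescoping across the boundary layer, and estimate $\la\psi_\gamma|X\psi_\alpha\ra=\la\psi_\gamma|PXP\psi_\alpha\ra$ through the exponentially decaying kernel of $PXP$ rather than through the unbounded operator $X$ itself. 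Making this quantitative, and justifying the attendant interchange of limits, is the technical core of the proof.
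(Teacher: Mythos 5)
You correctly restrict attention to the one implication the paper actually establishes (the statement is a conjecture; the paper proves only that an $s$-localized basis with $s=1+\delta>1$ forces $C(P)=0$), and your Step 1 identity $P[[X,P],[Y,P]]P=[PXP,PYP]$ is exactly the paper's starting point. But your Step 3 is not a proof: you yourself identify the boundary estimate as "the technical core" and concede that your bookkeeping only yields $o(L^2)$ for $s>2$. That is a genuine gap, and it stems from a strategic choice: by expanding the trace in matrix elements of $\widehat X=(\la\psi_\alpha|X\psi_\beta\ra)$ you are forced to rely on the off-diagonal decay $\la\vec\mu_\alpha-\vec\mu_\beta\ra^{1-s}$, which for $s$ slightly above $1$ is far too weak to sum over a boundary layer of width comparable to $L$, and the extra cancellations you hope to exploit (anti-Hermiticity, summation by parts across the boundary) are left entirely unexecuted.

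The paper closes exactly this gap with a device you do not use: the finite-rank projector $P_L=\sum_{|\vec m|_\infty\le L}\ket{\psi_{\vec m}}\bra{\psi_{\vec m}}$. One first shows $\|\chi_LP-P_L\|_{\fS_2}\lesssim L^{2/3}$, so $\chi_LP$ may be replaced by $P_L$ in the Chern marker; then $\tr(P_L[[X,P],[Y,P]]P_L)$ reduces, after discarding the traceless commutator $[P_LXP_L,P_LYP_L]$, to the cross terms, bounded by $2\|(P-P_L)XP_L\|_{\fS_2}\|(P-P_L)YP_L\|_{\fS_2}$. The point is that this bound never needs off-diagonal decay of $\widehat X$ at all: for each interior center $\vec m$ one has $(P-P_L)X\psi_{\vec m}=(P-P_L)(X-m_1)\psi_{\vec m}$, where $\|(X-m_1)\psi_{\vec m}\|$ is uniformly bounded (this only uses $s\ge 1$) and the smallness comes entirely from $\|(P-P_L)\chi_{L-\ell}\|_{\fS_2}$ being small for an intermediate scale $\ell$ chosen as a fractional power of $L$; the $O(L\ell)$ boundary shell is handled by rank counting. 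This "absolute-value bookkeeping at the operator level" succeeds for every $\delta>0$ precisely where your matrix-element bookkeeping fails. If you want to salvage your route, the missing idea to import is this intermediate projector (and the two-scale decomposition $P_L=P_{L-2\ell}+(P_L-P_{L-2\ell})$), rather than finer cancellation in the boundary sums.
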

Note that obviously (a) implies (b). For the other equivalence, there have been a few works devoted to the study of non-periodic localization dichotomy. In particular, recent work \cite{2020MarcelliMoscolariPanati} has shown that (b) $\Rightarrow$ (c) with $s > 5$. Additionally, our previous work \cite{2021LuStubbs} has shown that (b) $\Rightarrow$ (a) (and hence (b) $\Rightarrow$ (c)) with $s > 5/2$. In this paper, we improve upon these previous works by showing that (b) $\Rightarrow$ (c) for $s > 1$. Formally stated, the main result of this paper is the following:
\begin{theorem}
\label{thm:main}
Suppose that $P$ is an orthogonal projection on $L^2(\R^2)$ which admits an exponentially localized kernel. If $P$ admits an $(1 + \delta)$-localized generalized Wannier basis for some $\delta > 0$, then the Chern marker $C(P)$ vanishes.
\end{theorem}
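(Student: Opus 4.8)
\emph{Strategy and reduction of the cutoff.}
The plan is to combine the algebraic identity $P\big[[X,P],[Y,P]\big]P = \big[\,PXP,PYP\,\big] =: \Theta$ with the $(1+\delta)$-localized Wannier basis $\{\psi_\alpha\}_{\alpha\in\cI}$, whose centers $\vec\mu_\alpha = (\mu_\alpha^{(1)},\mu_\alpha^{(2)})$ we take to be the centers of mass of the $\psi_\alpha$ (harmless after translating). Because $P$ has an exponentially localized kernel, so do $[X,P]$, $[Y,P]$ and hence $[[X,P],[Y,P]]$ (the algebraic prefactors $x_i-y_i$ cost only an arbitrarily small fraction of the exponential decay); therefore $\Theta$ is a \emph{bounded} operator with exponentially localized kernel, and $C(P) = \lim_{L\to\infty}\tfrac{2\pi i}{4L^2}\tr(\chi_L\Theta)$ (using $\chi_L^2=\chi_L$ and that $\chi_L\Theta$ is trace class). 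Set $\Lambda_L := [-L,L)^2$ and $\widehat{\Pi}_L := \sum_{\vec\mu_\alpha\in\Lambda_L}\ket{\psi_\alpha}\bra{\psi_\alpha}$; the first step is the cutoff–replacement bound $\tr(\chi_L\Theta) = \tr(\widehat{\Pi}_L\Theta) + o(L^2)$. Using $\Theta = P\Theta P$ and routine trace manipulations the error equals $\tr\!\big((P\chi_L P - \widehat{\Pi}_L)\Theta\big)$, controlled by $\|\Theta\|$ times the trace norm of $P\chi_L P - \widehat{\Pi}_L$; the latter operator is concentrated in a shell around $\partial\Lambda_L$, since $\|(1-\chi_L)\psi_\alpha\| \lesssim r^{-s}$ for $\vec\mu_\alpha\in\Lambda_L$ at distance $r$ from $\partial\Lambda_L$ (and $\|\chi_L\psi_\alpha\| \lesssim r^{-s}$ symmetrically), and the $\vec\mu_\alpha$ have bounded density (from $P(\vec x,\vec x)\lesssim 1$ and the localization of the $\psi_\alpha$). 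This gives $\|P\chi_L P - \widehat{\Pi}_L\|_1 = o(L^2)$.

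\emph{Antisymmetric cancellation.}
A short computation using $P\psi_\alpha = \psi_\alpha$ gives, with $A_{\alpha\beta} := \la \psi_\alpha|X|\psi_\beta\ra$ and $B_{\alpha\beta} := \la\psi_\alpha|Y|\psi_\beta\ra$,
\[
\la\psi_\alpha|\Theta|\psi_\alpha\ra = 2i\,\imag\!\sum_{\beta\in\cI} A_{\alpha\beta}\,\overline{B_{\alpha\beta}},
\]
the $\beta = \alpha$ term being absent because the diagonals of the Hermitian matrices $A$, $B$ are real. Since $A$, $B$ are Hermitian, $\imag(A_{\alpha\beta}\overline{B_{\alpha\beta}})$ is antisymmetric under $\alpha\leftrightarrow\beta$. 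As $\Lambda_L$ contains only finitely many centers, the pairs with \emph{both} $\vec\mu_\alpha,\vec\mu_\beta\in\Lambda_L$ cancel in $\tr(\widehat{\Pi}_L\Theta) = \sum_{\vec\mu_\alpha\in\Lambda_L}\la\psi_\alpha|\Theta|\psi_\alpha\ra$, leaving only the boundary cross term
\[
\tr(\widehat{\Pi}_L\Theta) = 2i\!\!\sum_{\vec\mu_\alpha\in\Lambda_L}\ \sum_{\vec\mu_\beta\notin\Lambda_L}\imag\big(A_{\alpha\beta}\,\overline{B_{\alpha\beta}}\big)
\]
(all rearrangements are legitimate: finitely many $\alpha$, and $\sum_\beta|A_{\alpha\beta}||B_{\alpha\beta}| \le \|PX\psi_\alpha\|\,\|PY\psi_\alpha\| < \infty$).

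\emph{The localization estimate and the count.}
For $\vec\mu_\alpha\in\Lambda_L$ and $\vec\mu_\beta\notin\Lambda_L$ one has $A_{\alpha\beta} = \la (X - \mu_\alpha^{(1)})\psi_\alpha \,|\, \psi_\beta\ra$, so $\sum_{\vec\mu_\beta\notin\Lambda_L}|A_{\alpha\beta}|^2 = \|P_{\mathrm{out}}(X-\mu_\alpha^{(1)})\psi_\alpha\|^2$ with $P_{\mathrm{out}} := \sum_{\vec\mu_\beta\notin\Lambda_L}\ket{\psi_\beta}\bra{\psi_\beta}$, and likewise for $B$. The crucial estimate is
\[
\big\|P_{\mathrm{out}}(X - \mu_\alpha^{(1)})\psi_\alpha\big\| \lesssim r_\alpha^{-(s-1)}, \qquad r_\alpha := \dist(\vec\mu_\alpha, \Lambda_L^c).
\]
To prove it, split $(X-\mu_\alpha^{(1)})\psi_\alpha$ at the ball $B(\vec\mu_\alpha, r_\alpha/2)$: the outer part has $L^2$-norm $\lesssim r_\alpha^{-(s-1)}$ by the truncated-moment bound $\int_{|\vec x - \vec\mu_\alpha| > R}|\vec x - \vec\mu_\alpha|^2 |\psi_\alpha|^2 \lesssim R^{-2(s-1)}$ (here $s > 1$ is used), while the inner part is supported at distance $\geq r_\alpha/2$ from every center outside $\Lambda_L$, so its $P_{\mathrm{out}}$-component has norm $\lesssim r_\alpha^{-(s-1)}$ because its overlap with a $\psi_\beta$ having $|\vec\mu_\beta - \vec\mu_\alpha| = D$ is $\lesssim D^{-s}$ and $\sum_{D \geq r_\alpha} D^{-2s}\cdot D \lesssim r_\alpha^{2 - 2s}$ converges (again $s > 1$, plus bounded density of centers). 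Cauchy--Schwarz in $\beta$ then yields
\[
\big|\tr(\widehat{\Pi}_L\Theta)\big| \;\lesssim\; \sum_{\vec\mu_\alpha\in\Lambda_L} r_\alpha^{-2(s-1)} \;\lesssim\; L\sum_{n=1}^{L} n^{-2(s-1)} \;\lesssim\; L\log L + L^{\,4 - 2s},
\]
which is $o(L^2)$ exactly because $s = 1 + \delta > 1$; together with the cutoff-replacement bound this gives $C(P) = 0$.

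\emph{Main obstacle.}
The soft parts are the antisymmetric cancellation and the final counting. The real work lies in the two localization estimates — the cutoff replacement $\|P\chi_L P - \widehat{\Pi}_L\|_1 = o(L^2)$ and the bound $\|P_{\mathrm{out}}(X-\mu_\alpha^{(1)})\psi_\alpha\| \lesssim r_\alpha^{-(s-1)}$ — where one must track the polynomial tails of the $\psi_\alpha$, the unbounded operators $X$, $Y$, and the nonlocal operator $I - P$ carefully enough that the exponent comes out at the sharp value $s = 1$ rather than some larger threshold.
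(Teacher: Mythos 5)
Your proposal is correct, and it shares the paper's overall architecture: replace $\chi_L P[[X,P],[Y,P]]P\chi_L$ by the finite-rank projector onto the Wannier functions centered in $[-L,L)^2$, kill the bulk contribution because it is the trace of a commutator of finite-rank self-adjoint matrices (your antisymmetry identity $\la\psi_\alpha|\Theta|\psi_\alpha\ra = 2i\,\imag\sum_\beta A_{\alpha\beta}\overline{B_{\alpha\beta}}$ is precisely the matrix-element form of Equation \eqref{eq:PL-chern-equiv}), and then show the inside--outside cross term is $o(L^2)$. Where you genuinely diverge is in the final estimate. The paper bounds $\sum_{\alpha}\|P_{\mathrm{out}}(X-\mu_\alpha^{(1)})\psi_\alpha\|^2$ by introducing an auxiliary scale $\ell$, splitting space with $\chi_{L-\ell}$, and invoking the uniform Hilbert--Schmidt estimates of Propositions \ref{prop:near-bd} and \ref{prop:far-bd}, which give the same bound $\ell^{-\delta}+L\ell^{-(1+\delta)}$ for every bulk center, with the $O(L\ell)$ near-boundary centers handled by a rank count and $\ell$ optimized at the end. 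You instead prove the center-dependent bound $\|P_{\mathrm{out}}(X-\mu_\alpha^{(1)})\psi_\alpha\|\lesssim r_\alpha^{-(s-1)}$ in terms of the distance $r_\alpha$ to the boundary and sum over shells; this needs no auxiliary scale, applies Cauchy--Schwarz per center rather than globally through $\fS_2$ norms, and makes the threshold $s>1$ visibly sharp via the convergence of $\sum_n n\cdot n^{-2s}$ and the smallness of $L\sum_{n\le L}n^{-2(s-1)}$. Your inner/outer split is sound, but note that the per-$\beta$ overlap bound $\lesssim D^{-s}$ silently uses $\|(X-\mu_\alpha^{(1)})\psi_\alpha\|\lesssim 1$ (a second place where $s\ge 1$ enters), and centers with $r_\alpha<1$ must fall back on this trivial bound, contributing the harmless $O(L)$ term. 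Two points of hygiene: passing from $\tr(\chi_L\Theta\chi_L)$ to $\tr(\chi_L\Theta)$ and then to $\tr\bigl((P\chi_LP-\widehat\Pi_L)\Theta\bigr)$ requires checking trace-classness before invoking cyclicity --- it is cleaner to keep the symmetric form and bound $\|(\chi_LP-\widehat\Pi_L)\Theta P\chi_L\|_{\fS_1}+\|\widehat\Pi_L\Theta(P\chi_L-\widehat\Pi_L)\|_{\fS_1}$ as the paper does, which also supplies the quantitative $O(L^{5/3})$ behind your asserted $o(L^2)$ cutoff replacement; and the bounded density of the centers, which your shell counts use repeatedly, deserves the short argument the paper gives rather than a parenthetical.
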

Our proof of Theorem \ref{thm:main} can be extended to $s = 1$ by making some additional technical assumptions on the generalized Wannier basis however a full proof for the $s = 1$ case remains open. We note that Theorem~\ref{thm:main} establishes only one part of the localization dichotomy, while the other direction, $C(P) = 0$ implies the existence of localized generalized Wannier basis, is still quite open. 

\subsection*{Notations}
Vectors in $\R^d$ will be denoted by bold face with their components denoted by subscripts. For example, $\vec{v} = (v_1, v_2, v_3, \cdots, v_d) \in \R^d$. For any $\vec{v} \in \R^d$, we use $| \cdot |$ to denote its $\ell^2$-norm and $| \cdot |_{\infty}$ to denote its $\ell^\infty$-norm; that is, $| \vec{v} | := \bigl(\sum_{i=1}^d v_i^2\bigr)^{1/2}$, $| \vec{v} |_{\infty} := \max_{i} |v_i|$. For any $\vec{x} \in \R^2$ and $a \in \R^+$, we define $\chi_{a}$ to be the indicator function of the set $[-a, a)^2$ and $B_a(\vec{x})$ be the ball of radius $a$ centered at $\vec{x}$.

For any $f : \R^2 \rightarrow \C$, we will use $\|f\|$ to denote the $L^2$-norm. For any bounded linear operator $A$ on $L^2(\R^2)$, we adopt the following conventions:
\begin{itemize}
\item Let $\| A \|$ denote the spectral norm of $A$, $\| A \| := \sup_{\| f \| = 1} \| A f \|$.
\item If $A$ is compact, let $\{ \sigma_n(A) \}_{n = 1}^\infty$ denote the singular values of $A$ in decreasing order (i.e. if $i < j$ then $\sigma_i(A) \geq \sigma_j(A)$). 
\item If $A$ is compact, let $\| A \|_{\fS_p} = \bigl(\sum_{n=1}^\infty \sigma_n(A)^p \bigr)^{1/p}$ denote the Schatten $p$-norm for any $p \geq 1$. 
\end{itemize}
Note that with this convention $\| A \| = \| A \|_{\fS_\infty}$.

In our estimates, $C$ is used as a generic constants whose value may change from line to line. We also write $A \lesssim B$ if there exists a constant $C$ such that $A \leq C B$.

\subsection*{Organization}
The remainder of this paper is organized as follows. In Section \ref{sec:main-theorem}, we outline the proof of Theorem \ref{thm:main} relying on a number of propositions (Proposition \ref{prop:chiP-PL-approx}, \ref{prop:PL-chern}, and \ref{prop:p_x_pl_bd}). Next, in Section \ref{sec:technical-est} we state and prove three important technical estimates which are central to the proofs of these propositions. We provide proofs of Proposition \ref{prop:chiP-PL-approx} in Section \ref{sec:chiP-PL-approx}, Proposition \ref{prop:PL-chern} in Section \ref{sec:PL-chern}, and Proposition \ref{prop:p_x_pl_bd} in Section \ref{sec:p_x_pl_bd}, respectively.

\section{Proof of Main Theorem}
\label{sec:main-theorem}
We begin our proof by recalling the notion of bounded density which was introduced in \cite{2021LuStubbs} to simplify the analytic estimates. After recalling the consequences of bounded density (in particular, Lemma \ref{lem:fdc}), we will use these results to prove the main theorem.

\subsection{Bounded Density}
\label{sec:bdd-density}
We begin with the definition of bounded density
\begin{definition}
\label{def:bdd-density}
We say that a collection of points $\{ \vec{\mu}_{\alpha} \}_{\alpha \in \cI}$ has \textbf{bounded density} if there exists a constant $M < \infty$ such that for all $\vec{x} \in \R^2$ we have
\[
    \#  \{ \alpha :  \vec{\mu}_{\alpha} \in B_1(\vec{x}) \}  \leq M
\]  
\end{definition}
Importantly, if orthogonal projector $P$ has an exponentially localized kernel, one can show that the center points of every well localized basis must have bounded density. 
\begin{lemma}
  Let $P$ be an orthogonal projector which admits an exponentially localized kernel. If $\{ \psi_{\alpha} \}_{\alpha \in \cI}$ is an $s$-localized generalized Wannier basis for $P$ for some $s > 0$, then the center points for $\{ \psi_{\alpha} \}_{\alpha \in \cI}$ have bounded density.
\end{lemma}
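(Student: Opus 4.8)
The plan is a two-sided estimate on $\tr\bigl(\chi_{B_R(\vec{x})}\, P\, \chi_{B_R(\vec{x})}\bigr)$ for a fixed radius $R$ (to be chosen), combining an upper bound coming from the exponentially localized kernel of $P$ with a lower bound coming from counting the basis functions whose centers lie near $\vec{x}$. Throughout, fix an arbitrary $\vec{x} \in \R^2$ and abbreviate $\chi := \chi_{B_R(\vec{x})}$.

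For the upper bound, I would first note that exponential localization forces the diagonal of the kernel of $P$ to be uniformly bounded: since $P = P^*P$, for a.e.\ $\vec{y}$ we have $P(\vec{y},\vec{y}) = \int_{\R^2} |P(\vec{z},\vec{y})|^2 \dd{\vec{z}} \leq C^2 \int_{\R^2} e^{-2\gamma|\vec{z}-\vec{y}|}\dd{\vec{z}} =: \kappa < \infty$. Consequently $P\chi$ is Hilbert--Schmidt with $\|P\chi\|_{\fS_2}^2 = \int_{B_R(\vec{x})} P(\vec{y},\vec{y})\dd{\vec{y}} \leq \kappa\,\pi R^2$, and since $P^*P = P$ this quantity is exactly $\tr(\chi P \chi)$. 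Hence $\tr\bigl(\chi_{B_R(\vec{x})}\,P\,\chi_{B_R(\vec{x})}\bigr) \leq \kappa\,\pi R^2$.

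For the lower bound, I would use that $\{\psi_\alpha\}_{\alpha\in\cI}$ is orthonormal and spans $\range(P)$, so $P = \sum_\alpha \ket{\psi_\alpha}\bra{\psi_\alpha}$ with strong convergence (indeed trace-norm convergence after conjugating by $\chi$); therefore $\tr(\chi P \chi) = \sum_\alpha \|\chi\psi_\alpha\|^2 \geq \sum_{\alpha:\,\vec{\mu}_\alpha \in B_1(\vec{x})} \|\chi\psi_\alpha\|^2$. For $R > 1$ and any $\alpha$ with $\vec{\mu}_\alpha \in B_1(\vec{x})$ we have $B_{R-1}(\vec{\mu}_\alpha) \subseteq B_R(\vec{x})$, so $\|\chi\psi_\alpha\|^2 \geq \int_{B_{R-1}(\vec{\mu}_\alpha)} |\psi_\alpha|^2 = 1 - \int_{|\vec{y}-\vec{\mu}_\alpha| \geq R-1} |\psi_\alpha(\vec{y})|^2 \dd{\vec{y}} \geq 1 - C\la R-1\ra^{-2s}$, where the last step uses that $\la\cdot\ra$ is increasing in $|\cdot|$ together with the moment bound in Definition~\ref{def:s-loc}.

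Finally I would fix $R$ once and for all, depending only on $C$ and $s$, large enough that $C\la R-1\ra^{-2s} \leq \tfrac12$; then each term in the last sum is at least $\tfrac12$, and combining with the upper bound gives $\tfrac12\,\#\{\alpha : \vec{\mu}_\alpha \in B_1(\vec{x})\} \leq \kappa\,\pi R^2$, i.e.\ bounded density with $M = 2\kappa\,\pi R^2$, uniformly in $\vec{x}$. I do not expect a genuine obstacle: this is essentially a packing argument, and the only points requiring a little care are the routine operator-theoretic manipulations — justifying $\tr(\chi P \chi) = \sum_\alpha \|\chi\psi_\alpha\|^2$ and the identification of $\|P\chi\|_{\fS_2}^2$ with the integral of the diagonal of $P$ — both of which are standard given an exponentially localized kernel.
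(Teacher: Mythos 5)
Your proposal is correct and follows essentially the same route as the paper: an upper bound $\|\chi_{B_R(\vec{x})} P\|_{\fS_2}^2 \lesssim R^2$ from the exponentially localized kernel, combined with a lower bound of $\tfrac12$ per basis function centered in $B_1(\vec{x})$ obtained from the $s$-moment condition, with $R$ chosen so that each $\psi_\alpha$ keeps at least half its mass in $B_{R-1}(\vec{\mu}_\alpha)$. The paper phrases the conclusion as a contradiction rather than a direct two-sided estimate, but the two arguments are the same.
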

\begin{proof}
For this proof, let $\chi_{B_r(\vec{a})}$ denote the characteristic function of the ball $B_r(\vec{a})$: $\chi_{B_r(\vec{a})}(\vec{x}) = 1$ if $\vec{x} \in B_r(\vec{a})$ and zero otherwise. We start by observing two important facts. 
\begin{enumerate}[label=(\roman*), wide]
    \item If $\{ \psi_{\alpha} \}_{\alpha \in \cI}$ is an $s$-localized basis for $s > 0$ with center points $\{ \vec{\mu}_\alpha \}_{\alpha \in \cI}$ then we have that
\begin{align*}
\| (1 - \chi_{B_r(\vec{\mu}_\alpha)}) \psi_{\alpha} \|^2  
& = \int_{\R^2} (1 - \chi_{B_r(\vec{\mu}_\alpha)}(\vec{x})) \frac{\la \vec{x} - \vec{\mu_\alpha} \ra^{2s}}{\la \vec{x} - \vec{\mu_\alpha} \ra^{2s}} |\psi_{\alpha}(\vec{x})|^2 \dd{\vec{x}} \\
& \lesssim r^{-2s}  \int_{\R^2}\la \vec{x} - \vec{\mu_\alpha} \ra^{2s} |\psi_{\alpha}(\vec{x})|^2\dd{\vec{x}} 
\end{align*}
Hence, since the collection $\{ \psi_{\alpha} \}_{\alpha \in \cI}$ is $s$-localized, there exists a constant $C$, uniform in $\alpha$, so that $\| (1 - \chi_{B_r(\vec{\mu}_\alpha)}) \psi_{\alpha} \|^2 \leq C r^{-2s}$. Thus we can find a radius $R > 0$ so that for all $\alpha \in \cI$ and all $r \geq R$
\begin{equation}
\label{eq:bdd-density-R}
\| (1 - \chi_{B_{r}(\vec{\mu}_\alpha)}) \psi_{\alpha} \|^2 \leq \frac{1}{2}.
\end{equation}
Since $\| (1 - \chi_{B_{r}(\vec{\mu}_\alpha)}) \psi_{\alpha} \|^2 + \| \chi_{B_{r}(\vec{\mu}_\alpha)} \psi_{\alpha} \|^2 = 1$, have that for all $r \geq R$, $\| \chi_{B_{r}(\vec{\mu}_\alpha)} \psi_{\alpha} \|^2 \geq \frac{1}{2}$.
\item Since $P$ admits an exponentially localized kernel, one easily checks that there exists a constant $K$ so that for all $a \in \R^2$: 
\begin{equation}
\label{eq:bdd-density-P}
\| \chi_{B_r(a)} P \|_{\fS_2}^2
= \int_{\R^2} \int_{\R^2} \chi_{B_r(a)}(\vec{x}) |P(\vec{x},\vec{y})|^2 \dd{\vec{x}} \dd{\vec{y}} \leq K r^2
\end{equation}
\end{enumerate}
Now let $\{ \psi_{\alpha} \}_{\alpha \in \cI}$ be an $s$-localized basis for some $s > 0$ and towards a contradiction suppose that the center points of this basis does not have bounded density. 

Since the center points for this basis do not have bounded density, we can find a point $\vec{x}^* \in \R^2$ so that the ball $B_1(\vec{x}^*)$ has more than $4 K (R + 1)^2$ center points where the constant $R$ is from Equation \eqref{eq:bdd-density-R} and the constant $K$ is from Equation \eqref{eq:bdd-density-P}. Let us denote the set of these center points by $\cA := \{ \alpha : \vec{\mu}_{\alpha} \in B_1(\vec{x}^*) \}$.

Due to Equation \eqref{eq:bdd-density-R} we have that
\[
\| \chi_{B_{R+1}(\vec{x}^*)} P \|_{\fS_2}^2 \leq K (R + 1)^2
\]
but on the other hand we have that
\begin{align*}
\| \chi_{B_{R+1}(\vec{x}^*)} P \|_{\fS_2}^2
& \geq \sum_{\alpha \in \cA} \| \chi_{B_{R+1}(\vec{x}^*)} \psi_{\alpha} \|^2 
 \geq \frac{1}{2} \big(\# \cA \big)
 \geq 2 K (R + 1)^2
\end{align*}
where we have used that $\alpha \in \cA$ implies that $B_R(\vec{\mu}_\alpha) \subseteq B_{R+1}(\vec{x}^*)$ and Equation \eqref{eq:bdd-density-R}. This is a contradiction and hence the center points of $\{ \psi_{\alpha} \}_{\alpha \in \cI}$ must have bounded density.
\end{proof}
The usefulness of the notion of bounded density is that we can effectively treat any basis with bounded density to have its center points on the integer lattice.
\begin{lemma}
\label{lem:fdc}
Let $\{ \psi_{\alpha} \}_{\alpha \in \cI}$ is a $s$-localized basis with center points $\{ \vec{\mu}_{\alpha} \}_{\alpha \in \cI}$. If we additionally assume that the center points have bounded density, then we may find a positive integer $M$ so that we can relabel the basis as $\{ \psi_{\vec{m}}^{(j)} \}$ where $\vec{m} \in \Z^2$ and $j \in \{ 1, \cdots, M \}$. Furthermore, the center point of $\psi_{\vec{m}}^{(j)}$ can be taken to be $\vec{m}$ without loss of generality.
\end{lemma}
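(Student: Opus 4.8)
The plan is to organize the center points according to the unit cells of the integer lattice, using bounded density to control how many land in each cell. First I would observe that the half-open cell $Q_{\vec{m}} := \vec{m} + [0,1)^2$ is contained in the ball $B_1(\vec{m} + (\tfrac12,\tfrac12))$, since every point of $Q_{\vec{m}}$ is within distance $1/\sqrt 2 < 1$ of $\vec{m} + (\tfrac12,\tfrac12)$. Hence Definition \ref{def:bdd-density} provides an integer $M < \infty$ with $\#\{\alpha \in \cI : \vec{\mu}_\alpha \in Q_{\vec{m}}\} \le M$ for every $\vec{m} \in \Z^2$.

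Since $\{Q_{\vec{m}}\}_{\vec{m}\in\Z^2}$ partitions $\R^2$, every $\alpha \in \cI$ is assigned a unique $\vec{m}(\alpha) \in \Z^2$ with $\vec{\mu}_\alpha \in Q_{\vec{m}(\alpha)}$. For each fixed $\vec{m}$ I would enumerate the (at most $M$) indices $\alpha$ with $\vec{m}(\alpha) = \vec{m}$ in an arbitrary order, writing the corresponding functions as $\psi^{(1)}_{\vec{m}}, \dots, \psi^{(k_{\vec{m}})}_{\vec{m}}$ with $k_{\vec{m}} \le M$; this exhibits the relabeling $\{\psi_\alpha\}_{\alpha\in\cI} = \{\psi^{(j)}_{\vec{m}}\}$ indexed by a subset of $\Z^2 \times \{1,\dots,M\}$ (cells with fewer than $M$ center points simply leave some indices $j$ unused).

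It remains to show the new center points may be taken to be the lattice points $\vec{m}$ themselves. For any $\vec{a},\vec{b} \in \R^2$ the elementary estimate $1 + |\vec{a}+\vec{b}|^2 \le 2(1+|\vec{a}|^2)(1+|\vec{b}|^2)$ gives $\la \vec{a}+\vec{b}\ra^{2s} \le 2^s \la \vec{a}\ra^{2s}\la \vec{b}\ra^{2s}$. Applying this with $\vec{a} = \vec{x} - \vec{\mu}_\alpha$ and $\vec{b} = \vec{\mu}_\alpha - \vec{m}$, and noting $|\vec{\mu}_\alpha - \vec{m}| \le 1/\sqrt 2$ so that $\la \vec{\mu}_\alpha - \vec{m}\ra^{2s}$ is bounded by a constant depending only on $s$, I obtain $\la \vec{x}-\vec{m}\ra^{2s} \lesssim \la \vec{x}-\vec{\mu}_\alpha\ra^{2s}$ pointwise. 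Integrating against $|\psi_\alpha(\vec{x})|^2$ and invoking the $s$-localization bound for the original centers yields $\int_{\R^2}\la \vec{x}-\vec{m}\ra^{2s}|\psi_\alpha(\vec{x})|^2\dd{\vec{x}} \le C'$ with $C'$ uniform in $\alpha$, which is exactly the claim.

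There is no real obstacle in this argument: the only points requiring care are that the lattice cells genuinely tile $\R^2$ (so the relabeling is well defined and injective) and that the Japanese bracket is stable under the bounded translation $\vec{\mu}_\alpha \mapsto \vec{m}$, which is precisely what the Peetre-type inequality above supplies.
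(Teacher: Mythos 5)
Your argument is correct and follows essentially the same route as the paper's proof: tile $\R^2$ by half-open unit cells, use bounded density to cap the number of centers per cell at $M$, and transfer the $s$-localization bound to the lattice point via a Peetre-type inequality (the paper phrases this last step as a triangle-inequality check, and pads with zero functions rather than leaving indices unused). One immaterial slip: with your cell $Q_{\vec{m}} = \vec{m} + [0,1)^2$ the new center $\vec{m}$ is a corner of the cell, so $|\vec{\mu}_\alpha - \vec{m}|$ can be as large as $\sqrt{2}$ rather than $1/\sqrt{2}$; since all that is needed is a uniform bound on $\la \vec{\mu}_\alpha - \vec{m} \ra^{2s}$, nothing changes.
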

\begin{proof}
For each $\vec{m} \in \Z^2$ let us define the unit square centered at $\vec{m}$ as follows
\[
S_{\vec{m}} := \bigg[ m_1 - \frac{1}{2}, m_1 + \frac{1}{2} \bigg) \times \bigg[ m_2 - \frac{1}{2}, m_2 + \frac{1}{2} \bigg).
\]
Since the basis $\{ \psi_{\alpha} \}_{\alpha \in \cI}$ has center points with bounded density, we know that there are at most $M$ center points contained in the square $S_{\vec{m}}$ (as it is contained in $B_1(\vec{m})$).  Because of this, we can relabel this basis as $\{ \psi_{\vec{m}}^{(j)} \}$ where $\psi_{\vec{m}}^{(j)}$ has its center in $S_{\vec{m}}$ and $j$ is a degeneracy index which runs from $\{ 1, \cdots, M \}$. If $S_{\vec{m}}$ has fewer than $M$ center points, say it has $j^*$, then we define $\psi_{\vec{m}}^{(j)} \equiv 0$ for all $j > j^*$. Strictly speaking this enlarged set is no longer a basis, but it does not really matter as we are only interested in the Chern marker, which is unchanged by this enlargement. 

If $\psi_{\vec{m}}^{(j)}$ initially had center point $\vec{\mu}_{\alpha}$, by construction $| \vec{m} - \vec{\mu}_{\alpha} |_2 \leq \frac{\sqrt{2}}{2}$. Therefore, using triangle inequality, it is easy to check that the collection $\{ \psi_{\vec{m}}^{(j)} \}$ is $s$-localized if we choose $\vec{m}$ as the center point of $\psi_{\vec{m}}^{(j)}$ instead. 
\end{proof}

Throughout our proof, we will assume that $M = 1$ to simplify notation. Considering the case $M > 1$ only has the effect of introducing a multiplicative factor of $M$ to some of our upper bounds and does not change the overall argument or results.

\subsection{Proof outline}
\label{sec:chern-marker}
As discussed in the previous section, as a consequence of bounded density (with $M = 1$), any $s$-localized basis may be written as $\{ \psi_{\vec{m}} \}$ where $\psi_{\vec{m}}$ has its center point at $\vec{m}$. Given a fixed choice of basis, we can now define the projector $P_{L}$ which projects onto the basis functions centered within the box of size $L$:
\begin{equation}
\label{eq:pl-def}
    P_L := \sum_{| \vec{m} |_{\infty} \leq L} \ket{\psi_{\vec{m}}} \bra{\psi_{\vec{m}}}.
\end{equation}
Throughout the rest of this paper, we will assume that projector $P_L$ is fixed and defined through a basis $\{ \psi_{\vec{m}} \}$ which is $(1 + \delta)$-localized for some $\delta > 0$.

Unlike $\chi_L P$ which appears in the definition of the Chern marker, the projector $P_L$ has finite rank and $\range{(P_L)} \subseteq \range{(P)}$. In some sense, the orthogonal projector $P_L$ captures the local information of $P$ in more controlled way than multiplying $P$ by the cutoff $\chi_L$ as in the definition of the Chern marker. Importantly, thanks to the decay property of the basis functions $\{ \psi_{\vec{m}} \}$, approximating $\chi_L P$ with $P_L$ incurs an error which is subleading compared to the area of $\chi_L$:
\begin{proposition}
\label{prop:chiP-PL-approx}
Suppose that $P$ admits a $(1 + \delta)$-localized basis where $\delta > 0$. There exists a constant $C$ such that for all $L \geq 1$:
\[
\| \chi_{L} P - P_{L} \|_{\fS_2} \leq C L^{2/3}
\]
\end{proposition}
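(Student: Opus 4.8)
The plan is to decompose the difference $\chi_L P - P_L$ into pieces that can be controlled by the decay of the Wannier functions and the exponential localization of $P$'s kernel, separating a ``bulk'' region where the two projectors nearly agree from a ``boundary'' region whose area is $O(L)$. Write $P = \sum_{\vec{m} \in \Z^2} \ket{\psi_{\vec{m}}}\bra{\psi_{\vec{m}}}$ (using that $\{\psi_{\vec m}\}$ spans $\range(P)$, after the reduction $M=1$), so that
\[
\chi_L P - P_L = \chi_L \sum_{\vec{m} \in \Z^2} \ket{\psi_{\vec{m}}}\bra{\psi_{\vec{m}}} - \sum_{|\vec{m}|_\infty \le L} \ket{\psi_{\vec{m}}}\bra{\psi_{\vec{m}}}.
\]
I would group the terms according to whether $|\vec m|_\infty \le L$ or not. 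For $|\vec m|_\infty > L$, the contribution is $\chi_L \ket{\psi_{\vec m}}\bra{\psi_{\vec m}}$, whose $\fS_2$-norm is $\|\chi_L \psi_{\vec m}\|$; this is small when $\vec m$ is far from the box $[-L,L)^2$, by the $(1+\delta)$-localization bound $\int \la \vec x - \vec m\ra^{2(1+\delta)}|\psi_{\vec m}(\vec x)|^2\,\dd{\vec x}\le C$. For $|\vec m|_\infty \le L$, the contribution is $-(1-\chi_L)\ket{\psi_{\vec m}}\bra{\psi_{\vec m}}$, with $\fS_2$-norm $\|(1-\chi_L)\psi_{\vec m}\|$, which is small when $\vec m$ is well inside the box. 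So in both cases the summand is controlled by $\dist(\vec m, \partial([-L,L)^2))$ raised to a negative power.

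The key quantitative step is: for a center point $\vec m$ at distance $d$ from the boundary of the box (on the appropriate side), the relevant tail satisfies $\|\chi_L \psi_{\vec m}\|$ or $\|(1-\chi_L)\psi_{\vec m}\| \lesssim \la d \ra^{-(1+\delta)}$, directly from Chebyshev/Markov applied to the $(1+\delta)$-localization integral, exactly as in the proof of the bounded-density lemma. Then by the triangle inequality for $\|\cdot\|_{\fS_2}$ — or more carefully, by orthogonality arguments that let one sum squares rather than norms — one gets
\[
\|\chi_L P - P_L\|_{\fS_2} \lesssim \sum_{\vec m \in \Z^2} \la \dist(\vec m, \partial [-L,L)^2)\ra^{-(1+\delta)}.
\]
Counting lattice points at each distance $d$ from the boundary of an $L\times L$ box: there are $O(L)$ such points for each $d$ up to $O(L)$, giving $\sum_{d=0}^{O(L)} L \la d\ra^{-(1+\delta)}$. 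Since $\delta > 0$ this sum converges and the bound is $O(L)$, which is even better than the claimed $O(L^{2/3})$. I suspect the author states $L^{2/3}$ because the naive triangle-inequality bound on $\fS_2$-norms does not exploit orthogonality and one instead controls $\|\chi_L P - P_L\|_{\fS_2}$ by interpolating between an $\fS_1$ (or operator-norm) bound and the crude sum, or because the Wannier functions are not assumed orthogonal to all of $P$ after the $M>1$ padding — in any case, a clean route is to bound $\|\chi_L P - P_L\|_{\fS_2}^2$ by splitting $\R^2$ into the boundary annulus $A_L$ of width $L^{2/3}$ around $\partial[-L,L)^2$ and its complement, using $\|\chi_{A_L} P\|_{\fS_2}^2 \lesssim |A_L| \lesssim L \cdot L^{2/3}$ from \eqref{eq:bdd-density-P}-type estimates, and using the decay of the basis to show the part of $\chi_L P - P_L$ supported away from $A_L$ has $\fS_2$-norm bounded by $L$ times a tail $\la L^{2/3}\ra^{-(1+\delta)}$, which is $o(L^{2/3})$ once $\delta$ is fixed; optimizing the annulus width against these two competing terms produces the exponent $2/3$ as a convenient uniform choice.

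The main obstacle is bookkeeping the sum over center points near the boundary while correctly accounting for the fact that $\fS_2$-norms of sums of rank-one operators are not simply sums of individual $\fS_2$-norms; one must either invoke near-orthogonality of the $\ket{\psi_{\vec m}}$ (they are genuinely orthonormal, so $\sum_{\vec m\in S}\ket{\psi_{\vec m}}\bra{\psi_{\vec m}}$ has $\fS_2$-norm squared equal to $\#S$, and more generally $\|\chi_L \sum_{\vec m\in S}\ket{\psi_{\vec m}}\bra{\psi_{\vec m}}\|_{\fS_2}^2 = \sum_{\vec m\in S}\|\chi_L \psi_{\vec m}\|^2$ only after further care since $\chi_L\psi_{\vec m}$ are not orthogonal), or, more robustly, reduce everything to estimates of the form $\|\chi_{\Omega} P\|_{\fS_2}^2 \lesssim |\Omega|$ for suitable regions $\Omega$ together with pointwise tail bounds on the basis functions. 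Handling the cross terms carefully, and choosing the cutoff scale $L^{2/3}$ to balance the two error contributions, is where the real work lies; everything else is a routine application of Chebyshev's inequality and the exponential decay of $P(\vec x,\vec y)$.
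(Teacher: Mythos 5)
Your overall strategy---split the error into contributions from centers outside the box cut by $\chi_L$ and centers inside the box cut by $1-\chi_L$, control each by Chebyshev-type tail bounds, and handle the boundary shell by an area/rank count---is the same skeleton as the paper's proof (which splits off the shells $P_{L+\ell}-P_L$ and $P_L-P_{L-\ell}$, bounds them by $\rank^{1/2}\lesssim (L\ell)^{1/2}$, bounds the far/near pieces via Propositions \ref{prop:far-bd} and \ref{prop:near-bd}, and then optimizes $\ell=L^{1/(3+2\delta)}$ to obtain the exponent $(2+\delta)/(3+2\delta)\le 2/3$). But as written your argument does not establish the claimed bound, for two concrete reasons. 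First, your comparison of exponents is backwards: $O(L)$ is \emph{worse} than $O(L^{2/3})$, not better, so the triangle-inequality computation you carry out---which bounds $\|\chi_LP-P_L\|_{\fS_2}$ itself (not its square) by $\sum_{\vec m}\la \dist(\vec m,\partial[-L,L)^2)\ra^{-(1+\delta)}=O(L)$---proves nothing. Moreover that sum is not even $O(L)$ for all $\delta>0$: in the corner regions there are $O(L+d)$ centers at distance $d$, and $\sum_d d\cdot\la d\ra^{-(1+\delta)}$ diverges for $\delta\le1$ unless one exploits the product-form decay as in Lemma \ref{lem:decay-trick}(i). Second, your fallback annulus argument with width $L^{2/3}$ fails quantitatively: it gives $\|\chi_{A_L}P\|_{\fS_2}\lesssim|A_L|^{1/2}\lesssim(L\cdot L^{2/3})^{1/2}=L^{5/6}$, which exceeds $L^{2/3}$. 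You have conflated the annulus width with the final exponent; the balancing width must satisfy $Lw\le L^{4/3}$, i.e.\ $w\le L^{1/3}$, and the paper's choice is $w=\ell=L^{1/(3+2\delta)}$.

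The orthogonality route you allude to in passing, executed properly, does work and is in fact stronger than the stated proposition. Since $\{\psi_{\vec m}\}$ is an orthonormal basis of $\range(P)$ and $\chi_LP-P_L$ annihilates $\range(P)^\perp$, the identity $\|A\|_{\fS_2}^2=\sum_n\|Ae_n\|^2$ (valid for any orthonormal basis of the domain, regardless of whether the images are orthogonal---so your stated worry is a red herring) gives exactly
\begin{equation*}
\|\chi_LP-P_L\|_{\fS_2}^2=\sum_{|\vec m|_\infty>L}\|\chi_L\psi_{\vec m}\|^2+\sum_{|\vec m|_\infty\le L}\|(1-\chi_L)\psi_{\vec m}\|^2,
\end{equation*}
with no cross terms. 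Each summand is $\lesssim\la d\ra^{-2(1+\delta)}$ with $d$ the $\ell^\infty$-distance from $\vec m$ to the boundary of the box, there are $O(L+d)$ centers at each distance $d$, and $\sum_d(L+d)\la d\ra^{-2(1+\delta)}\lesssim L$ since $2(1+\delta)>2$. Hence $\|\chi_LP-P_L\|_{\fS_2}\lesssim L^{1/2}\le L^{2/3}$. So the proposition follows from your second idea once you sum squares rather than norms---but the proof you actually wrote down does not get there.
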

\begin{proof}
Proven in Section \ref{sec:chiP-PL-approx}.
\end{proof}
As a consequence of this proposition, we can show that replacing $\chi_L P$ with $P_L$ in the definition of the Chern marker does not change the overall limit:
\begin{proposition}
\label{prop:PL-chern}
If $P$ admits a $(1+\delta)$-localized generalized Wannier basis where $\delta > 0$ then
  \begin{equation}
   \lim_{L \rightarrow \infty} \frac{1}{L^2} \left\| \chi_{L} P \Big[ [ X, P ], [ Y, P ] \Big] P \chi_{L} - P_L \Big[ [ X, P ], [ Y, P ] \Big] P_L \right\|_{\fS_1} = 0.
  \end{equation}
Hence
\begin{equation}
    \lim_{L \rightarrow \infty} \frac{2 \pi i}{4L^2} \tr{\left( \chi_{L} P \Big[ [ X, P ], [ Y, P ] \Big] P \chi_{L}\right)} = \lim_{L \rightarrow \infty} \frac{2 \pi i}{4 L^2} \tr{\left( P_L \Big[ [ X, P ], [ Y, P ] \Big] P_L\right)}
\end{equation}
whenever at least one of the above limits exists.
\end{proposition}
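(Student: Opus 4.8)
The plan is to deduce the second display directly from the first by exploiting that the operator $B := [[X,P],[Y,P]]$ is bounded (indeed uniformly bounded, since $P$ has an exponentially localized kernel so the commutators $[X,P]$ and $[Y,P]$ are bounded with kernels $(\vec{x}-\vec{y})_i P(\vec{x},\vec{y})$ which decay exponentially). Once the first display is established, the trace identity follows: writing $A_1 = \chi_L P$ and $A_2 = P_L$, both of which have $\range \subseteq \range(P)$ up to the cutoff, we have $\tr(A_1 B A_1^*) - \tr(A_2 B A_2^*)$ controlled in absolute value by $\|B\|\,\|A_1 B A_1^* - A_2 B A_2^*\|_{\fS_1}$ — no wait, more carefully, the difference of traces is bounded by the $\fS_1$-norm of the difference of the two operators appearing inside the trace, and dividing by $L^2$ and sending $L \to \infty$ kills it. So the content is entirely in the first display.

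For the first display, I would telescope: $\chi_L P B P \chi_L - P_L B P_L = (\chi_L P - P_L) B P \chi_L + P_L B (P \chi_L - P_L)$. (Here I use $P P_L = P_L$ since $\range(P_L) \subseteq \range(P)$, and $P_L B P_L = P_L B P P_L$.) Each term is a product of three operators, one of which — the difference $\chi_L P - P_L$ — is small in $\fS_2$ by Proposition~\ref{prop:chiP-PL-approx}, namely $O(L^{2/3})$, while $B$ is bounded uniformly in $L$. The remaining factor ($P\chi_L$ in the first term, $P_L$ in the second) must be controlled in $\fS_2$ as well, using Hölder for Schatten norms: $\|XYZ\|_{\fS_1} \le \|X\|_{\fS_2}\,\|Y\|\,\|Z\|_{\fS_2}$. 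We have $\|P\chi_L\|_{\fS_2}^2 = \|\chi_L P\|_{\fS_2}^2 \le K\,(2L)^2 \lesssim L^2$ by the Hilbert–Schmidt bound \eqref{eq:bdd-density-P} (applied with a box rather than a ball, which is the same up to constants), and $\|P_L\|_{\fS_2}^2 = \rank(P_L) \lesssim L^2$ since $P_L$ is a projection onto the span of the $\psi_{\vec m}$ with $|\vec m|_\infty \le L$, of which there are $O(L^2)$. Hence each telescoped term has $\fS_1$-norm $\lesssim L^{2/3} \cdot 1 \cdot L = L^{5/3}$, so the whole difference is $o(L^2)$, and dividing by $L^2$ and letting $L \to \infty$ gives the claimed limit $0$.

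I should double-check the bookkeeping that lets me write $\chi_L P B P \chi_L - P_L B P_L$ as that telescoped sum, since $P_L$ is not literally $P$ times anything — but $P_L = P_L P = P P_L$ because $\range(P_L) \subseteq \range(P)$ and both are orthogonal projections, so $P_L B P_L = P_L (P B P) P_L$, and then I can insert and remove $P$'s freely. Writing $D := \chi_L P - P_L$, the identity
\[
\chi_L P\, B\, P \chi_L - P_L B P_L = D\, B\, P\chi_L + P_L\, B\, (P\chi_L)^* - P_L B P_L
\]
needs a tiny bit of care: actually the cleanest telescoping is $\chi_L P B P \chi_L - P_L B P_L = D B P\chi_L + P_L B D^*$ where $D^* = P\chi_L - P_L$ (using that $(\chi_L P)^* = P\chi_L$ and $P_L^* = P_L$), and expanding confirms the cross terms cancel correctly. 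Then the two Hölder estimates above apply verbatim.

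\textbf{Main obstacle.} The genuinely hard input is Proposition~\ref{prop:chiP-PL-approx} (the $L^{2/3}$ bound on $\|\chi_L P - P_L\|_{\fS_2}$), which is proved separately and which I am allowed to assume here; granting it, the present proposition is a short Hölder-and-telescoping argument, and the only thing to be careful about is that the uniform boundedness of $B = [[X,P],[Y,P]]$ really does follow from the exponential localization of $P$'s kernel (so that the unbounded multiplication operators $X, Y$ produce bounded commutators) — this is where the hypothesis that $P$ admits an exponentially localized kernel is used in an essential way, and it should be stated explicitly rather than left implicit.
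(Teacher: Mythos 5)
Your proposal is correct and follows essentially the same route as the paper: the identical telescoping $\chi_L P\,\cC\,P\chi_L - P_L\cC P_L = (\chi_L P - P_L)\cC P\chi_L + P_L\cC(P\chi_L - P_L)$, Hölder in the form $\|\cdot\|_{\fS_1}\le\|\cdot\|_{\fS_2}\|\cdot\|\|\cdot\|_{\fS_2}$, the bounds $\|\cC\|\lesssim 1$, $\|P\chi_L\|_{\fS_2}\lesssim L$, $\|P_L\|_{\fS_2}\lesssim L$, and Proposition~\ref{prop:chiP-PL-approx} to get $O(L^{5/3})=o(L^2)$. The points you flag for care (uniform boundedness of $[[X,P],[Y,P]]$ from the exponentially localized kernel, and $|\tr(T)|\le\|T\|_{\fS_1}$ for the second display) are exactly how the paper handles them.
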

\begin{proof}
Proven in Section \ref{sec:PL-chern}
\end{proof}
Hence to prove Theorem \ref{thm:main} it suffices to show that if $P$ admits an $(1 + \delta)$-localized generalized Wannier basis then
\begin{equation}
\label{eq:pl-chern-marker}
\lim_{L \rightarrow \infty} \frac{2 \pi i}{4 L^2} \tr{\left( P_L \Big[ [ X, P ], [ Y, P ] \Big] P_L\right)} = 0.
\end{equation}
Towards proving Equation \eqref{eq:pl-chern-marker}, we begin by observing that since $P_L$ is defined through a $(1 + \delta)$-localized basis, the position operator $X$ is a bounded operator on $\range{(P_L)}$ for each $L$. In particular, we have that
\begin{align*}
\| X P_L \|^2 
& \leq \sum_{| \vec{m} |_{\infty} \leq L} \| X \psi_{\vec{m}} \|^2 \\
& \leq \sum_{| \vec{m} |_{\infty} \leq L} \Big( \| (X - m_1) \psi_{\vec{m}} \| + |m_1| \| \psi_{\vec{m}} \|  \Big)^2 \\
& \leq  \sum_{| \vec{m} |_{\infty} \leq L} \Big( \| (X - m_1) \psi_{\vec{m}} \| + L  \Big)^2 \\
& \lesssim L^4
\end{align*}
Similarly, it is easily checked that $Y$ is also a bounded operator on $\range{(P_L)}$.

We will now use the fact that $X$ and $Y$ are both bounded operators on $\range{(P_L)}$ to perform some algebraic manipulations. Using the fact that $P^2 = P$ and $[X, Y] = 0$, one can  verify that (see also \cite{2020MarcelliMoscolariPanati,2020StubbsWatsonLu1})
\[
P \Big[ [ X, P ], [ Y, P ] \Big] P = [ PXP, PYP ].
\]
Therefore, since $P_L = P_L P = P P_L$, we have the following:
\begin{align*}
    P_L \Big[ [ X, P ], [ Y, P ] \Big] P_L 
    & = P_L [ PXP, PYP ] P_L \\
    & = P_L X P Y P_L - P_L Y P X P_L \\
    & = P_L X (P - P_L + P_L) Y P_L - P_L Y (P - P_L + P_L) X P_L \\
    & = [P_L X P_L, P_L Y P_L] + P_L X (P - P_L) Y P_L - P_L Y (P - P_L) X P_L.
\end{align*}
These manipulations are justified since $X$ and $Y$ are bounded operators on $\range{(P_L)}$. Since $P_L$ is finite rank, $[P_L X P_L, P_L Y P_L]$ is traceless and hence
\begin{equation}
\label{eq:PL-chern-equiv}
\tr \left( P_L \Big[ [ X, P ], [ Y, P ] \Big] P_L \right) = \tr\Bigl( P_L X (P - P_L) Y P_L - P_L Y (P - P_L) X P_L \Bigr).
\end{equation}
Hence using H{\"o}lder's inequality and $(P - P_L) = (P - P_L)^2$, we have that
\begin{align*}
|\tr \left( P_L \Big[ [ X, P ], [ Y, P ] \Big] P_L \right)|
& \leq \| P_L X (P - P_L) Y P_L \|_{\fS_1} + \| P_L Y (P - P_L) X P_L \|_{\fS_1} \\
& \leq 2 \| (P - P_L) X P_L \|_{\fS_2} \| (P - P_L) Y P_L \|_{\fS_2}. \numberthis{} \label{eq:pl_chern_upper}
\end{align*}

\begin{proposition}
\label{prop:p_x_pl_bd}
If $P$ admits a $(1 + \delta)$-localized generalized Wannier basis where $\delta > 0$ then
\begin{align}
& \lim_{L \rightarrow \infty} \frac{1}{L^2} \| (P - P_L) X P_L \|_{\fS_2}^2 = 0 \\[1ex]
& \lim_{L \rightarrow \infty} \frac{1}{L^2} \| (P - P_L) Y P_L \|_{\fS_2}^2 = 0.
\end{align}
\end{proposition}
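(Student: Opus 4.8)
The plan is to compute the Hilbert--Schmidt norm directly over the Wannier basis and then exploit that a basis function $\psi_{\vec{n}}$ centered outside $[-L,L]^2$ carries very little $L^2$-mass near a point $\vec{m}$ that lies deep inside the box. Since $P_L = \sum_{|\vec{m}|_\infty \le L}\ket{\psi_{\vec{m}}}\bra{\psi_{\vec{m}}}$ with the $\psi_{\vec{m}}$ orthonormal (after the relabeling of Lemma~\ref{lem:fdc} with $M=1$), we have $(P-P_L)XP_L = \sum_{|\vec{m}|_\infty\le L}\ket{(P-P_L)X\psi_{\vec{m}}}\bra{\psi_{\vec{m}}}$, hence
\[
\|(P-P_L)XP_L\|_{\fS_2}^2 = \sum_{|\vec{m}|_\infty\le L}\|(P-P_L)X\psi_{\vec{m}}\|^2 .
\]
Because $P_LP=PP_L=P_L$, the operator $P-P_L$ is the orthogonal projection onto $\overline{\spn}\{\psi_{\vec{n}} : |\vec{n}|_\infty > L\}$; in particular it annihilates each $\psi_{\vec{m}}$ with $|\vec{m}|_\infty\le L$, so $(P-P_L)X\psi_{\vec{m}} = (P-P_L)(X-m_1)\psi_{\vec{m}}$ and
\[
\|(P-P_L)X\psi_{\vec{m}}\|^2 = \sum_{|\vec{n}|_\infty > L}\bigl|\la\psi_{\vec{n}},(X-m_1)\psi_{\vec{m}}\ra\bigr|^2 .
\]
The $(1+\delta)$-localization gives $\|(X-m_1)\psi_{\vec{m}}\|^2 \le \int\la\vec{x}-\vec{m}\ra^{2(1+\delta)}|\psi_{\vec{m}}(\vec{x})|^2\dd{\vec{x}}\le C$ uniformly in $\vec{m}$, so Bessel's inequality already yields $\|(P-P_L)X\psi_{\vec{m}}\|^2\le C$; the real task is to improve this for $\vec{m}$ well inside the box.

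Fix $\vec{m}$ and set $d := L-|\vec{m}|_\infty$, $r := d/2$, $g_{\vec{m}} := (X-m_1)\psi_{\vec{m}}$, and split $g_{\vec{m}} = \chi_{B_r(\vec{m})}g_{\vec{m}} + (1-\chi_{B_r(\vec{m})})g_{\vec{m}}$. For the far part, $|g_{\vec{m}}(\vec{x})|^2 \le |\vec{x}-\vec{m}|^2|\psi_{\vec{m}}(\vec{x})|^2$, and on $\{|\vec{x}-\vec{m}|\ge r\}$ one has $|\vec{x}-\vec{m}|^2 \le r^{-2\delta}\la\vec{x}-\vec{m}\ra^{2(1+\delta)}$, so $\|(1-\chi_{B_r(\vec{m})})g_{\vec{m}}\|^2 \le r^{-2\delta}\int\la\vec{x}-\vec{m}\ra^{2(1+\delta)}|\psi_{\vec{m}}|^2 \lesssim d^{-2\delta}$, and applying the contraction $P-P_L$ only improves this. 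For the near part $f := \chi_{B_r(\vec{m})}g_{\vec{m}}$ (which has $\|f\|\le C$ but is not small), use $|\la\psi_{\vec{n}},f\ra| \le \|f\|\,\|\chi_{B_r(\vec{m})}\psi_{\vec{n}}\|$; when $|\vec{n}|_\infty > L$ every $\vec{x}\in B_r(\vec{m})$ satisfies $|\vec{x}-\vec{n}| \ge |\vec{n}-\vec{m}| - r \ge |\vec{n}-\vec{m}|/2$ (since $r = d/2 \le |\vec{n}-\vec{m}|/2$), so the $(1+\delta)$-localization of $\psi_{\vec{n}}$ gives $\|\chi_{B_r(\vec{m})}\psi_{\vec{n}}\|^2 \lesssim |\vec{n}-\vec{m}|^{-2(1+\delta)}$. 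Because the exponent $2(1+\delta)$ exceeds $2$, the sum over the infinitely many $\vec{n}$ outside the box converges:
\[
\|(P-P_L)f\|^2 = \sum_{|\vec{n}|_\infty > L}|\la\psi_{\vec{n}},f\ra|^2 \lesssim \sum_{\vec{n}\in\Z^2:\ |\vec{n}-\vec{m}|\ge d}|\vec{n}-\vec{m}|^{-2(1+\delta)} \lesssim d^{-2\delta}.
\]
Combining the two parts, $\|(P-P_L)X\psi_{\vec{m}}\|^2 \lesssim \min\bigl(1, d^{-2\delta}\bigr)$ with $d = L-|\vec{m}|_\infty$.

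It remains to sum over $\vec{m}$. Grouping by $j = |\vec{m}|_\infty$ (there are $\lesssim j+1$ lattice points with $|\vec{m}|_\infty = j$) and writing $d = L-j$,
\[
\frac{1}{L^2}\|(P-P_L)XP_L\|_{\fS_2}^2 \lesssim \frac{1}{L^2}\sum_{j=0}^{L}(j+1)\min\bigl(1,(L-j)^{-2\delta}\bigr) \lesssim \frac{L+1}{L^2}\sum_{d=0}^{L}\min\bigl(1,d^{-2\delta}\bigr).
\]
The inner sum is $O(1)$ if $\delta>\tfrac12$, $O(\log L)$ if $\delta=\tfrac12$, and $O(L^{1-2\delta})$ if $\delta<\tfrac12$, so in every case the right-hand side tends to $0$ as $L\to\infty$ (with rate $L^{-\min(2\delta,1)}$ up to a logarithm). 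The estimate for $Y$ is identical, replacing $m_1$ by $m_2$. The only delicate point is the bookkeeping in the middle step: the factor of $X$ must be absorbed entirely into the far-tail loss (costing $\delta$, not $1+\delta$), while the overlap $\|\chi_{B_r(\vec{m})}\psi_{\vec{n}}\|^2$ retains the full exponent $2(1+\delta)>2$ --- precisely what makes the sum over $\vec{n}$ summable --- and one must still check that the remaining growth is $o(L^2)$ even in the borderline range $\delta\le\tfrac12$, where $\sum_d d^{-2\delta}$ itself diverges.
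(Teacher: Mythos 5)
Your proof is correct, but it takes a genuinely different route from the paper's. The paper introduces a global buffer scale $\ell$, splits $P_L$ into a bulk piece $P_{L-2\ell}$ plus an annulus $P_L - P_{L-2\ell}$, bounds the annulus by counting ($O(L\ell)$ terms, each $O(1)$ by the trivial Bessel bound), bounds each bulk term uniformly in $\vec{m}$ by $\lesssim \ell^{-\delta} + L\ell^{-(1+\delta)}$ via the technical estimate of Proposition~\ref{prop:far-bd}, and then optimizes $\ell = L^{2/(2+\delta)}$. You instead prove a \emph{per-center} estimate $\|(P-P_L)X\psi_{\vec{m}}\|^2 \lesssim \min\bigl(1,(L-|\vec{m}|_\infty)^{-2\delta}\bigr)$ that degrades continuously as $\vec{m}$ approaches the boundary of the box, obtained by splitting $(X-m_1)\psi_{\vec{m}}$ around a ball of radius $\tfrac12(L-|\vec{m}|_\infty)$ and expanding $P - P_L$ in the exterior basis functions $\psi_{\vec{n}}$, $|\vec{n}|_\infty > L$; the price is the case analysis $\delta \gtrless \tfrac12$ when summing over $\vec{m}$ (which you carry out correctly --- all three cases are $o(L^2)$). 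Your argument is self-contained, avoiding Propositions~\ref{prop:near-bd} and~\ref{prop:far-bd} and the auxiliary parameter entirely, and it yields a sharper explicit rate, $L^{-\min(2\delta,1)}$ up to a logarithm, versus the paper's $L^{-\delta/(2+\delta)}$; the paper's version is more modular in that it reuses the Section~\ref{sec:technical-est} machinery already needed for Proposition~\ref{prop:chiP-PL-approx}. All the key steps in your write-up check out: the identification of $P-P_L$ as the projection onto $\overline{\spn}\{\psi_{\vec{n}} : |\vec{n}|_\infty > L\}$, the replacement of $X$ by $X-m_1$, the inequality $|\vec{n}-\vec{m}| \ge |\vec{n}|_\infty - |\vec{m}|_\infty > d$ justifying $|\vec{x}-\vec{n}| \ge |\vec{n}-\vec{m}|/2$ on $B_{d/2}(\vec{m})$, and the convergence of $\sum_{|\vec{n}-\vec{m}|\ge d}|\vec{n}-\vec{m}|^{-2(1+\delta)} \lesssim d^{-2\delta}$ in two dimensions.
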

Since the mapping $x \mapsto \sqrt{x}$ is continuous for $x > 0$, Proposition \ref{prop:p_x_pl_bd} and Equation \eqref{eq:pl_chern_upper} imply that
\[
\lim_{L \rightarrow \infty} \frac{1}{L^2} \Bigl\lvert\tr \Bigl( P_L \Big[ [ X, P ], [ Y, P ] \Big] P_L \Bigr)\Bigr\rvert = 0
\]
which proves Equation \eqref{eq:pl-chern-marker}, completing the proof of Theorem \ref{thm:main}.

\section{Technical Estimates}
\label{sec:technical-est}
In this section we prove two technical estimates (Proposition \ref{prop:near-bd} and Proposition \ref{prop:far-bd}) which are fundamental in our proofs of Proposition \ref{prop:chiP-PL-approx} and Proposition \ref{prop:p_x_pl_bd}.

\begin{proposition}
\label{prop:near-bd}
If $P$ admits a $(1 + \delta)$-localized generalized Wannier basis then for all $a, b \geq 1$: 
\[
\| (1 - \chi_{a+b}) P_{a} \|_{\fS_2}^2 \lesssim a^2 b^{-2(1+\delta)}
\]
\end{proposition}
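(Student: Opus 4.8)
The plan is to bound $\| (1 - \chi_{a+b}) P_a \|_{\fS_2}^2$ by expanding $P_a = \sum_{|\vec{m}|_\infty \le a} \ket{\psi_{\vec{m}}}\bra{\psi_{\vec{m}}}$ and using the triangle inequality in the Hilbert--Schmidt norm. First I would write
\[
\| (1 - \chi_{a+b}) P_a \|_{\fS_2} \le \sum_{|\vec{m}|_\infty \le a} \| (1 - \chi_{a+b}) \ket{\psi_{\vec{m}}}\bra{\psi_{\vec{m}}} \|_{\fS_2} = \sum_{|\vec{m}|_\infty \le a} \| (1 - \chi_{a+b}) \psi_{\vec{m}} \| \, \| \psi_{\vec{m}} \|,
\]
using that a rank-one operator $\ket{u}\bra{v}$ has Hilbert--Schmidt norm $\|u\|\,\|v\|$. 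Since $\|\psi_{\vec{m}}\| = 1$, the task reduces to estimating $\| (1 - \chi_{a+b}) \psi_{\vec{m}} \|$ for each $\vec{m}$ with $|\vec{m}|_\infty \le a$.

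The key point is that for such $\vec{m}$, the center point $\vec{m}$ lies well inside the box $[-(a+b), a+b)^2$; in fact any $\vec{x}$ in the complement of that box satisfies $|\vec{x} - \vec{m}|_\infty \ge b$, hence $|\vec{x} - \vec{m}| \gtrsim b$ and $\la \vec{x} - \vec{m}\ra \gtrsim b$ (using $b \ge 1$). Therefore, mimicking step (i) of the bounded-density lemma,
\[
\| (1 - \chi_{a+b}) \psi_{\vec{m}} \|^2 = \int_{([-(a+b),a+b))^2)^c} \frac{\la \vec{x} - \vec{m}\ra^{2(1+\delta)}}{\la \vec{x} - \vec{m}\ra^{2(1+\delta)}} |\psi_{\vec{m}}(\vec{x})|^2 \dd{\vec{x}} \lesssim b^{-2(1+\delta)} \int_{\R^2} \la \vec{x} - \vec{m}\ra^{2(1+\delta)} |\psi_{\vec{m}}(\vec{x})|^2 \dd{\vec{x}} \lesssim b^{-2(1+\delta)},
\]
where the last bound is the $(1+\delta)$-localization hypothesis (with $s = 1+\delta$) and the constant is uniform in $\vec{m}$. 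This gives $\| (1 - \chi_{a+b}) \psi_{\vec{m}} \| \lesssim b^{-(1+\delta)}$.

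Finally I would count the terms: the number of $\vec{m} \in \Z^2$ with $|\vec{m}|_\infty \le a$ is at most $(2a+1)^2 \lesssim a^2$ for $a \ge 1$. Combining,
\[
\| (1 - \chi_{a+b}) P_a \|_{\fS_2} \lesssim a^2 \cdot b^{-(1+\delta)},
\]
so squaring yields $\| (1 - \chi_{a+b}) P_a \|_{\fS_2}^2 \lesssim a^4 b^{-2(1+\delta)}$. This is off by a factor $a^2$ from the claimed $a^2 b^{-2(1+\delta)}$, so the naive triangle-inequality bound is too lossy; the main obstacle is recovering the sharp power of $a$. The fix is to not pull the Hilbert--Schmidt norm inside the sum so crudely: instead, observe that for distinct $\vec{m}$ the functions $(1-\chi_{a+b})\psi_{\vec{m}}$ have controlled overlap because $\{\psi_{\vec{m}}\}$ is orthonormal and localized. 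Concretely, one writes $\| (1-\chi_{a+b}) P_a \|_{\fS_2}^2 = \tr\big( P_a (1-\chi_{a+b}) P_a \big) = \sum_{|\vec{m}|_\infty \le a} \la \psi_{\vec{m}}, (1-\chi_{a+b}) \psi_{\vec{m}}\ra = \sum_{|\vec{m}|_\infty \le a} \| (1-\chi_{a+b})\psi_{\vec{m}} \|^2$, since $1 - \chi_{a+b}$ is a projection (multiplication by an indicator). Now each summand is $\lesssim b^{-2(1+\delta)}$ by the estimate above, and there are $\lesssim a^2$ summands, giving exactly $\| (1-\chi_{a+b}) P_a \|_{\fS_2}^2 \lesssim a^2 b^{-2(1+\delta)}$ as required. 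So the real content is to use the trace/orthonormality identity rather than the triangle inequality, after which the per-term decay estimate is the routine part.
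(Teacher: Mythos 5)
Your proposal is correct and, after the self-correction, coincides with the paper's argument: the identity $\| (1-\chi_{a+b}) P_a \|_{\fS_2}^2 = \sum_{|\vec{m}|_\infty \le a} \| (1-\chi_{a+b})\psi_{\vec{m}} \|^2$ (which follows from orthonormality of the $\psi_{\vec{m}}$), the per-term bound $\lesssim b^{-2(1+\delta)}$ obtained by inserting the localization weight and using that $\la\vec{x}-\vec{m}\ra \gtrsim b$ off the box, and the count of $\lesssim a^2$ terms are exactly the paper's three steps. The initial triangle-inequality detour is indeed too lossy, but you correctly identified and repaired it, so nothing further is needed.
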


\begin{proof}
We can expand the Hilbert-Schmidt norm we want to bound as follows:
\begin{equation}
\label{eq:near-bd1}
\| (1 - \chi_{a+b}) P_{a} \|_{\fS_2}^2 = \sum_{|\vec{m}|_\infty \leq a} \| (1 - \chi_{a+b}) \psi_{\vec{m}} \|^2.
\end{equation}
Because of the separation between the sets $\{ \vec{m} \in \Z^2 : | \vec{m} |_{\infty} \leq a \}$ and $\supp{(1 - \chi_{a+b})}$ we can show that each of the terms in the above sum are small. In particular
\begin{align*}
    \| (1 - \chi_{a + b}) \psi_{\vec{m}} \|^2
    & = \int_{\R^2} (1 - \chi_{a+b}(\vec{x})) |  \psi_{\vec{m}}(\vec{x})|^2 \dd{\vec{x}} \\
    & = \int_{\R^2} (1 - \chi_{a+b}(\vec{x})) \frac{(1 + |x_1 - m_1| + |x_2 - m_2| )^{2(1+\delta)}}{(1 + |x_1 - m_1| + |x_2 - m_2| )^{2(1+\delta)}} |  \psi_{\vec{m}}(\vec{x})|^2 \dd{\vec{x}} 
\end{align*}
Since $| \vec{m} |_{\infty} \leq a$ we have the pointwise bound
\[
\frac{(1 - \chi_{a+b}(\vec{x}))}{(1 + |x_1 - m_1| + |x_2 - m_2| )} \leq \frac{1}{1 + (a+b) - | \vec{m} |_{\infty}} \leq \frac{1}{1 + b}.
\]
Therefore, for each since $\psi_{\vec{m}}$ is $(1 + \delta)$-localized we have that:
\begin{align*}
\| (1 - & \chi_{a + b}) \psi_{\vec{m}} \|^2 \\
& \leq (1 + b)^{-2(1+\delta)} \int_{\R^2} (1 + |x_1 - m_1| + |x_2 - m_2| )^{2(1+\delta)} |  \psi_{\vec{m}}(\vec{x})|^2 \dd{\vec{x}}  \\
& \leq C b^{-2(1+\delta)}
\end{align*}
for some absolute constant $C$. Using this bound in Equation \eqref{eq:near-bd1}, we conclude that
\begin{align*}
\| (1 - \chi_{a+b}) P_{a} \|_{\fS_2}^2
& \leq \sum_{|\vec{m}|_\infty \leq a} C b^{-2(1+\delta)} \\[1ex]
& \lesssim a^2 b^{-2(1+\delta)}
\end{align*}
which completes the proof.
\end{proof}

\begin{proposition}
\label{prop:far-bd}
If $P$ admits a $(1+\delta)$-localized generalized Wannier basis then for all $a, b \geq 1$: 
\[
\| \chi_{a} ( P - P_{a + b} ) \|_{\fS_2}^{2} \lesssim  b^{-\delta} + a b^{-(1+\delta)}
\]
\end{proposition}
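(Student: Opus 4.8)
The plan is to mirror the proof of Proposition~\ref{prop:near-bd}, with the roles of ``near'' and ``far'' reversed. First I would note that, after the relabeling of Lemma~\ref{lem:fdc} (with $M=1$), the full collection $\{\psi_{\vec{m}}\}_{\vec{m}\in\Z^2}$ is a complete orthonormal basis for $\range(P)$, so that $P = \sum_{\vec{m}\in\Z^2}\ket{\psi_{\vec{m}}}\bra{\psi_{\vec{m}}}$ and hence $P - P_{a+b} = \sum_{|\vec{m}|_\infty > a+b}\ket{\psi_{\vec{m}}}\bra{\psi_{\vec{m}}}$. Expanding the Hilbert--Schmidt norm against an orthonormal basis of $L^2(\R^2)$ that contains $\{\psi_{\vec{m}}\}_{\vec{m}\in\Z^2}$ and is otherwise supported in $\ker(P)$, every basis vector outside $\range(P)$ contributes zero, and the surviving terms give
\[
\| \chi_a (P - P_{a+b}) \|_{\fS_2}^2 = \sum_{|\vec{m}|_\infty > a+b} \| \chi_a \psi_{\vec{m}} \|^2.
\]
This reduces the proposition to a term-by-term estimate exploiting the geometric separation between $\supp(\chi_a) = [-a,a)^2$ and the far-away centers $\vec{m}$.

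Next I would bound each summand. For $\vec{m}$ with $|\vec{m}|_\infty > a+b$ and any $\vec{x}\in[-a,a)^2$, one has $1 + |x_1-m_1| + |x_2-m_2| \geq 1 + |\vec{m}|_\infty - a$ (take the coordinate realizing $|\vec{m}|_\infty$). Dividing and multiplying by $(1+|x_1-m_1|+|x_2-m_2|)^{2(1+\delta)}$ exactly as in Proposition~\ref{prop:near-bd} and invoking the $(1+\delta)$-localization of $\psi_{\vec{m}}$ yields
\[
\| \chi_a \psi_{\vec{m}} \|^2 \lesssim \bigl( 1 + |\vec{m}|_\infty - a \bigr)^{-2(1+\delta)} .
\]

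Then I would carry out the lattice sum by grouping the points $\vec{m}\in\Z^2$ according to $k = |\vec{m}|_\infty$, of which there are $O(k)$ for each integer $k\geq 1$. Writing $k = (k-a) + a \leq (1+k-a) + a$ gives
\[
\sum_{|\vec{m}|_\infty > a+b} \bigl(1 + |\vec{m}|_\infty - a\bigr)^{-2(1+\delta)} \lesssim \sum_{k > a+b} (1+k-a)^{-(1+2\delta)} + a\sum_{k > a+b} (1+k-a)^{-2(1+\delta)},
\]
and comparing each series with the corresponding integral (the summands being evaluated at arguments $> 1+b$ spaced by one) bounds the right-hand side by $\lesssim b^{-2\delta} + a\,b^{-(1+2\delta)}$, which is in turn $\leq b^{-\delta} + a\,b^{-(1+\delta)}$ since $b\geq 1$. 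This is the claimed estimate, uniformly in $a,b\geq 1$.

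The main obstacle is conceptual and bookkeeping rather than analytic: I must be careful that the identification $P - P_{a+b} = \sum_{|\vec{m}|_\infty > a+b}\ket{\psi_{\vec{m}}}\bra{\psi_{\vec{m}}}$ is legitimate — this rests on completeness of the relabeled basis for $\range(P)$ and convergence of the defining series in the strong operator topology — and then that the two-parameter dependence is tracked so the final bound emerges in precisely the stated form for all $a,b\geq 1$ (in particular without any hidden assumption that $a$ and $b$ are comparable). Everything past the reduction is a routine $p$-series comparison.
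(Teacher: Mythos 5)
Your argument is correct and delivers the claimed bound (in fact the slightly stronger $b^{-2\delta}+a\,b^{-(1+2\delta)}$), but it organizes the estimate differently from the paper. Both proofs start from the same reduction $\| \chi_a (P-P_{a+b})\|_{\fS_2}^2 = \sum_{|\vec{m}|_\infty > a+b} \|\chi_a \psi_{\vec{m}}\|^2$; the paper then splits the index set into three regions according to which coordinates of $\vec{m}$ exceed $a+b$ and invokes Lemma \ref{lem:decay-trick}, whose proof inserts strip cutoffs and coordinate-wise weights $\la x_i - m_i\ra^{1+\delta}$ and finishes with AM--GM, producing a product-type decay in region $S_1$ and a single-coordinate decay in $S_2, S_3$. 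You avoid the case analysis entirely: the coordinate realizing $|\vec{m}|_\infty$ already forces $1+|x_1-m_1|+|x_2-m_2| \geq 1+|\vec{m}|_\infty - a$ on $[-a,a)^2$, and since this weight is comparable to the Japanese bracket $\la \vec{x}-\vec{m}\ra$ appearing in Definition \ref{def:s-loc}, you get the uniform per-term bound $\|\chi_a\psi_{\vec{m}}\|^2 \lesssim \la |\vec{m}|_\infty - a\ra^{-2(1+\delta)}$ for every far $\vec{m}$. This bound is never weaker than the paper's (it dominates the product bound of Lemma \ref{lem:decay-trick}(i) because $\la|\vec{m}|_\infty - a\ra \geq \la|m_i|-a\ra$ for each $i$), and summing over the shells $|\vec{m}|_\infty = k$ of cardinality $O(k)$, with $k \lesssim (1+k-a)+a$, produces both terms $b^{-2\delta}$ and $a\,b^{-(1+2\delta)}$ in one integral comparison, with no hidden comparability assumption between $a$ and $b$. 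Your route is the more economical of the two — one uniform estimate and one lattice sum in place of a three-case lemma — and since Lemma \ref{lem:decay-trick} is not used elsewhere in the paper, nothing is lost by bypassing it.
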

We start by stating a lemma which we prove at the end of the section. 
\begin{lemma}
\label{lem:decay-trick}
Suppose that $\{ \psi_{\vec{m}} \}$ is a $(1+\delta)$-localized basis. For any $a \geq 1$ we have the following bounds depending on the location of $\vec{m}$ in relation to $\supp{(\chi_{a})}$:
\begin{enumerate}[label=(\roman*)]
\item If $|m_1| > a$ and $|m_2| > a$ then
\[
\| \chi_{a} \psi_{\vec{m}} \|^2 \lesssim \la |m_1| - a \ra^{-(1 + \delta)} \la |m_2| - a \ra^{-(1 + \delta)}.
\]
\item If $|m_1| > a$ and $|m_2| \leq a$ then
\[
\| \chi_{a} \psi_{\vec{m}} \|^2 \lesssim \la |m_1| - a \ra^{-2(1 + \delta)}
\]
\item If $|m_1| \leq a$ and $|m_2| > a$ then
\[
\| \chi_{a} \psi_{\vec{m}} \|^2 \lesssim \la |m_2| - a \ra^{-2(1 + \delta)}
\]
\end{enumerate}
\end{lemma}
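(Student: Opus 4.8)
The plan is to estimate each $\|\chi_a\psi_{\vec m}\|^2$ directly, by inserting the $(1+\delta)$-localization weight $\langle\vec x-\vec m\rangle^{2(1+\delta)}$ under the integral, bounding its reciprocal by a constant (on the cube $[-a,a)^2$ this weight is forced to be large when $\vec m$ lies outside the cube), and pulling that constant out; what remains is $\int_{\R^2}\langle\vec x-\vec m\rangle^{2(1+\delta)}|\psi_{\vec m}(\vec x)|^2\dd{\vec x}$, which is bounded uniformly in $\vec m$ precisely because $\{\psi_{\vec m}\}$ is $(1+\delta)$-localized (with centers at the lattice points, by the conventions of Section~\ref{sec:bdd-density}).

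First I would record a purely geometric fact: if $\vec x\in[-a,a)^2$ and $|m_i|>a$ for a coordinate $i$, then $|x_i-m_i|\ge|m_i|-a$ (check the cases $m_i>a$ and $m_i<-a$ separately; since $x_i\in[-a,a)$ each is one line). Feeding this into $\langle\vec x-\vec m\rangle^2=1+(x_1-m_1)^2+(x_2-m_2)^2$ gives, on the cube, $\langle\vec x-\vec m\rangle^2\ge\langle|m_1|-a\rangle^2$ when $|m_1|>a$, $\langle\vec x-\vec m\rangle^2\ge\langle|m_2|-a\rangle^2$ when $|m_2|>a$, and --- the one point needing a small trick, used only for case (i) --- when \emph{both} $|m_1|,|m_2|>a$, the balanced bound $\langle\vec x-\vec m\rangle^2\ge\langle|m_1|-a\rangle\,\langle|m_2|-a\rangle$, which follows from the scalar inequality $1+u+v\ge\sqrt{(1+u)(1+v)}$ for $u,v\ge0$ applied with $u=(|m_1|-a)^2$ and $v=(|m_2|-a)^2$.

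Given these lower bounds the three cases are one-liners. In case (i), $\langle\vec x-\vec m\rangle^{-2(1+\delta)}\le\bigl(\langle|m_1|-a\rangle\langle|m_2|-a\rangle\bigr)^{-(1+\delta)}$ on $[-a,a)^2$, so
\[
\|\chi_a\psi_{\vec m}\|^2\le\bigl(\langle|m_1|-a\rangle\langle|m_2|-a\rangle\bigr)^{-(1+\delta)}\int_{\R^2}\langle\vec x-\vec m\rangle^{2(1+\delta)}|\psi_{\vec m}(\vec x)|^2\dd{\vec x}\lesssim\langle|m_1|-a\rangle^{-(1+\delta)}\langle|m_2|-a\rangle^{-(1+\delta)}.
\]
In cases (ii) and (iii) the coordinate lying inside the cube contributes nothing, so instead one dumps the full exponent $2(1+\delta)$ onto the single coordinate outside the cube, e.g. $\langle\vec x-\vec m\rangle^{-2(1+\delta)}\le\langle|m_1|-a\rangle^{-2(1+\delta)}$ in case (ii), giving $\|\chi_a\psi_{\vec m}\|^2\lesssim\langle|m_1|-a\rangle^{-2(1+\delta)}$; case (iii) is identical with the two coordinates interchanged.

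There is no real obstacle --- the lemma is a direct computation. The only things to watch are the sign bookkeeping in $|x_i-m_i|\ge|m_i|-a$, and recognizing that case (i) genuinely needs the split inequality $1+u+v\ge\sqrt{(1+u)(1+v)}$: using just one coordinate there would only give $\langle|m_1|-a\rangle^{-2(1+\delta)}$, which is not comparable to the desired product $\langle|m_1|-a\rangle^{-(1+\delta)}\langle|m_2|-a\rangle^{-(1+\delta)}$ when $|m_1|-a$ and $|m_2|-a$ differ greatly in size, whereas in cases (ii) and (iii) there is no second coordinate to exploit and the entire weight must ride on the one we control.
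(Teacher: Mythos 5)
Your proof is correct, and the underlying strategy is the same as the paper's: insert the localization weight, bound its reciprocal pointwise on $[-a,a)^2$, and pull the resulting decay factor out of the integral, leaving the uniformly bounded $(1+\delta)$-moment. The execution differs in one small but pleasant way. The paper works coordinate by coordinate: it introduces strip cutoffs $\chi^{\text{strip},X}_{D_x}$, $\chi^{\text{strip},Y}_{D_y}$ to extract one-dimensional brackets $\la x_1-m_1\ra^{1+\delta}\la x_2-m_2\ra^{1+\delta}$ inside the integral, and then needs an AM--GM step to reconcile that product with the localization hypothesis (which is stated in terms of the two-dimensional Japanese bracket). You instead keep the two-dimensional bracket throughout and obtain the product structure in case (i) from the single pointwise inequality $1+u+v\ge\sqrt{(1+u)(1+v)}$ with $u=(|m_1|-a)^2$, $v=(|m_2|-a)^2$, which is valid since $(1+u+v)^2-(1+u)(1+v)=u+v+u^2+v^2+uv\ge 0$. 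Your geometric fact $|x_i-m_i|\ge|m_i|-a$ on the cube is exactly what the strip cutoffs encode, and your cases (ii)--(iii) match the paper's. The net effect is a slightly shorter argument that dispenses with the strips and the AM--GM step; both proofs use the hypothesis in the same way and give the same constants up to unimportant factors.
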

With this lemma in hand, we can now prove Proposition \ref{prop:far-bd}.
\begin{proof}[Proof of Proposition \ref{prop:far-bd}]
By the properties of the Hilbert-Schmidt norm we see that
\begin{align*}
\| \chi_{a} ( P - P_{a + b} ) \|_{\fS_2}^{2}
& = \sum_{\| \vec{m} \| > a + b} \| \chi_{a} \psi_{\vec{m}} \|^2
\end{align*}
We now split the set $\{ \vec{m} \in \Z^2 : | \vec{m} |_{\infty} > a + b \}$ into three parts and bound each part separately
\[
\begin{split}
S_1 : = \Big\{ \vec{m} : |m_1| > a + b \text{ and } |m_2| > a + b \Big\}  \\[1ex]
S_2 : = \Big\{ \vec{m} : |m_1| > a + b \text{ and } |m_2| \leq a + b \Big\}  \\[1ex]
S_3 : = \Big\{ \vec{m} : |m_1| \leq a + b \text{ and } |m_2| > a + b \Big\}  \\
\end{split}
\]

We start with controlling $S_1$, by applying Lemma \ref{lem:decay-trick}(1) we have that
\begin{align*}
    \sum_{\vec{m} \in S_1} \| \chi_{a} \psi_{\vec{m}} \|^{2} 
    & \leq \sum_{\vec{m} \in S_1} \frac{C}{\la |m_1| - a \ra^{-(1 + \delta)} \la |m_2| - a \ra^{-(1 + \delta)}} \\
    & \leq C b^{-\delta} \sum_{\vec{m} \in S_1} \frac{1}{\la |m_1| - a \ra^{(1 + \delta/2)} \la |m_2| - a \ra^{(1 + \delta/2)}} \\
    & \leq C b^{-\delta} \sum_{\vec{m} \in \Z^2} \frac{1}{\la |m_1| - a \ra^{(1 + \delta/2)} \la |m_2| - a \ra^{(1 + \delta/2)}} 
\end{align*}
where in the second to last line we have used that since $\vec{m} \in S_1$, $\min\{ \la |m_1| - a \ra, \la |m_2| - a \ra \} > b$. Therefore, 
\[
\sum_{\vec{m} \in S_1} \| \chi_{a} \psi_{\vec{m}} \|^{2} \lesssim b^{-\delta}
\]

We now turn to bound the sum for $\vec{m} \in S_2$. Applying Lemma \ref{lem:decay-trick}(2) we have that there exists a constant $C$ such that
\begin{align*}
\sum_{\vec{m} \in S_2} \| \chi_{a} \psi_{\vec{m}} \|^{2}
& \leq \, \sum_{|m_1| > a + b} \sum_{|m_2| \leq a + b} \frac{C}{\la |m_1| - a \ra^{2(1 + \delta)}} \\
& \leq C b^{-(1 + \delta)} \sum_{|m_1| > a + b} \sum_{|m_2| \leq a + b} \frac{1}{\la |m_1| - a \ra^{(1 + \delta)}} \\
& \leq 2 C (a + b) b^{-(1 + \delta)} \sum_{m_1 \in \Z} \frac{1}{\la |m_1| - a \ra^{(1 + \delta)}} 
\end{align*}
where in the second to last line we have used that $\la |m_1| - a \ra > b$. Therefore, 
\[
\sum_{\vec{m} \in S_2} \| \chi_{a} \psi_{\vec{m}} \|^{2} \lesssim (a + b) b^{-(1 + \delta)}
\]

Repeating the same calculation for $S_3$ making the obvious changes we have that
\[
\sum_{\vec{m} \in S_3} \| \chi_{a} \psi_{\vec{m}} \|^{2} \lesssim (a + b) b^{-(1 + \delta)}
\]
Hence
\[
\| \chi_{a} ( P - P_{a + b} ) \|_{\fS_2}^{2} \leq C_1 b^{-\delta} + C_2 (a + b) b^{-(1 + \delta)} + C_3 (a + b) b^{-(1 + \delta)}
\]
which proves the result.
\end{proof}

It remains to prove Lemma \ref{lem:decay-trick} to finish the proof. 
\begin{proof}[Proof of Lemma \ref{lem:decay-trick}]
We will focus on the case when $|m_1| > a$ and $|m_2| > a$ and note the changes which must be made for the other cases. For these estimates, we will introduce the strip characteristic functions $\chi^{\text{strip},X}_{D}$ and $\chi^{\text{strip},Y}_{D}$ defined as follows
\[
\chi^{\text{strip},X}_{D}(\vec{x}) = 
\begin{cases}
1 & |x_1| \leq D \\
0 & \text{otherwise}
\end{cases}
\qquad \qquad
\chi^{\text{strip},Y}_{D}(\vec{x}) = 
\begin{cases}
1 & |x_2| \leq D \\
0 & \text{otherwise}
\end{cases}
\]
Next, let us define the distances $D_x := |m_1| - a$ and $D_y := |m_2| - a$. With these definitions, it is clear that up to a set of measure zero:
\[
\chi^{\text{strip},X}_{D_x}(\vec{x} - \vec{m}) \chi_{a}(\vec{x}) = 0 \quad \text{and} \quad \chi^{\text{strip},Y}_{D_y}(\vec{x} - \vec{m}) \chi_{a}(\vec{x}) = 0
\]
Therefore,
\begin{align*}
\| \chi_L \psi_{\vec{m}}  \|^2
    & = \int_{\R^2} \chi_{a}(\vec{x}) | \psi_{\vec{m}} (\vec{x})|^2 \dd{\vec{x}} \\
    & = \int_{\R^2} \chi_{a}(\vec{x}) \Big(1 - \chi^{\text{strip},X}_{D_x}(\vec{x} - \vec{m})\Big) | \psi_{\vec{m}} (\vec{x})|^2 \dd{\vec{x}} \\
    & = \int_{\R^2} \chi_{a}(\vec{x}) \Big(1 - \chi^{\text{strip},X}_{D_x}(\vec{x} - \vec{m})\Big) \frac{\la x_1 - m_1\ra^{(1+\delta)}}{\la x_1 - m_1\ra^{(1+\delta)}} | \psi_{\vec{m}} (\vec{x})|^2 \dd{\vec{x}}
\end{align*}
By definition of $\chi^{\text{strip},X}_{D_x}$ we have the pointwise bound:
\[
\frac{1 - \chi^{\text{strip},X}_{D_x}(\vec{x} - \vec{m})}{\la x_1 - m_1\ra} \leq \frac{1}{\la |m_1| - a \ra}
\]
Therefore,
\[
\| \chi_L \psi_{\vec{m}}  \|^2 \leq \frac{1}{\la |m_1| - a \ra^{(1+\delta)}}\int_{\R^2} \chi_{a}(\vec{x}) \la x_1 - m_1\ra^{(1+\delta)} | \psi_{\vec{m}} (\vec{x})|^2 \dd{\vec{x}}.
\]
By similar logic
\begin{align*}
\int_{\R^2} & \chi_{a}(\vec{x}) \la x_1 - m_1\ra | \psi_{\vec{m}} (\vec{x})|^2 \dd{\vec{x}} \\
& = \int_{\R^2} \chi_{a}(\vec{x}) \Big(1 - \chi^{\text{strip},Y}_{D_y}(\vec{x} - \vec{m})\Big) \frac{\la x_2 - m_2\ra^{(1+\delta)}}{\la x_2 - m_2\ra^{(1+\delta)}} \la x_1 - m_1\ra | \psi_{\vec{m}} (\vec{x})|^2 \dd{\vec{x}} \\
& \leq \frac{1}{\la |m_2| - a \ra^{(1+\delta)}} \int_{\R^2} \chi_{a}(\vec{x}) \la x_1 - m_1\ra^{(1+\delta)} \la x_2 - m_2\ra^{(1+\delta)} | \psi_{\vec{m}} (\vec{x})|^2 \dd{\vec{x}}.
\end{align*}
Hence
\[
\| \chi_a \psi_{\vec{m}}  \|^2 \leq \frac{1}{\la |m_1| - a \ra^{(1+\delta)} \la |m_2| - a \ra^{(1+\delta)}}\int_{\R^2} \chi_{a}(\vec{x}) \la x_1 - m_1\ra^{(1+\delta)} \la x_2 - m_2\ra^{(1+\delta)} | \psi_{\vec{m}} (\vec{x})|^2 \dd{\vec{x}}.
\]
Now recall that the geometric mean is bounded by the arithmetic mean so
\[
\la x_1 - m_1\ra^{(1+\delta)} \la x_2 - m_2\ra^{(1+\delta)} \leq \frac{1}{2} \Big(\la x_1 - m_1\ra^{2(1+\delta)} + \la x_2 - m_2\ra^{2(1+\delta)} \Big)
\]
Therefore,
\[
\| \chi_a \psi_{\vec{m}}  \|^2 \leq \frac{\| \la X - m_1 \ra^{(1+\delta)} \psi_{\vec{m}}  \|^2 + \| \la Y - m_2 \ra^{(1+\delta)} \psi_{\vec{m}}  \|^2}{2\la |m_1| - a \ra^{(1+\delta)} \la |m_1| - a \ra^{(1+\delta)}}.
\]
which implies the result since $\psi_{\vec{m}}$ is $(1+\delta)$-localized.

The case $|m_1| > a$ and $| m_2 | \leq a$ follows by inserting $\la x_1 - m_1 \ra^{2(1+\delta)} \la x_1 - m_1 \ra^{-2(1+\delta)}$ instead of $\la x_1 - m_1 \ra^{(1+\delta)} \la x_2 - m_2 \ra^{(1+\delta)} \la x_1 - m_1 \ra^{-(1+\delta)} \la x_2 - m_2 \ra^{-(1+\delta)}$; the case $|m_1| \leq a$ and $| m_2 | > a$ follows similarly.
\end{proof}

\section{Proof of Proposition \ref{prop:chiP-PL-approx}}
\label{sec:chiP-PL-approx}
Let us start by fixing some $\ell$ where $\ell \in [1, L)$ to be chosen later. We can split the quantity we would like to bound into four parts: 
\begin{align*}
    \| \chi_{L} P - P_L \|_{\fS_2} 
    & \leq \| \chi_{L} (P - P_L) \|_{\fS_2} + \| (1 -\chi_{L}) P_L \|_{\fS_2}\\
    & \leq \| \chi_{L} (P - P_{L + \ell}) \|_{\fS_2} +  \| \chi_{L} (P_{L + \ell} - P_L) \|_{\fS_2}\\
    & \hspace{5ex} + \| (1 -\chi_{L}) (P_L - P_{L - \ell}) \|_{\fS_2} + \| (1 -\chi_{L}) P_{L - \ell} \|_{\fS_2}
\end{align*}
The first term is bounded by Proposition \ref{prop:far-bd} by letting $a = L$, $b = \ell$ there exists a constant $C_1$ so that:
\begin{equation}
\label{eq:lem-bd1}
\| \chi_{L} (P - P_{L + \ell}) \|_{\fS_2} \leq C_1 (\ell^{-\delta} + L \ell^{-(1+\delta)})^{1/2}.
\end{equation}
The next two terms are bounded by observing that
\[
\begin{split}
&\rank{(P_{L+\ell} - P_{L})} \leq 4 ( (L+\ell)^2 - L^2) \leq 12 L \ell \\
& \rank{(P_{L} - P_{L-\ell})} \leq 4 (L^2 - (L-\ell)^2) \leq 12 L \ell 
\end{split}
\]
where we have used that $\ell < L$. Hence, there exists a constant $C_2$ so that
\begin{equation}
\label{eq:lem-bd2}
\| \chi_{L} (P_{L + \ell} - P_L) \|_{\fS_2} + \| (1 -\chi_{L}) (P_L - P_{L - \ell}) \|_{\fS_2} \leq C_2 (L \ell)^{1/2}.
\end{equation}
As for the final term, we can apply Proposition \ref{prop:near-bd} with $a = L - \ell$, $b = \ell$ to conclude that there exists a constant $C_3$ so that
\begin{equation}
\label{eq:lem-bd3}
\| (1 -\chi_{L}) P_{L - \ell} \|_{\fS_2} \leq C_3 (L^{2} \ell^{-2(1+\delta)})^{1/2}
\end{equation}
Combining the bounds in Equations \eqref{eq:lem-bd1}, \eqref{eq:lem-bd2}, \eqref{eq:lem-bd3}, we have that
\[
\| \chi_{L} P - P_L \|_{\fS_2} \leq C_1 (\ell^{-\delta} + L \ell^{-(1+\delta)})^{1/2} + C_2 (L \ell)^{1/2} + C_3 (L^{2} \ell^{-2(1+\delta)})^{1/2}.
\]
Now in the above equation we have four different terms each which have different big--O as $L \rightarrow \infty$:
\[
O(\ell^{-\delta}) \qquad O(L \ell^{-(1 + \delta)}) \qquad O(L \ell) \qquad O(L^2 \ell^{-2(1 + \delta)})
\]
Since $\ell > 1$, it's clear that the two dominating terms are $O(L \ell)$ and $O(L^2 \ell^{-2 (1 + \delta)})$. Since we are free to choose $\ell$, we will make a choice of $\ell$ so that these two terms balance. A simple calculation shows that choosing $\ell = L^{1/(3+2\delta)}$ gives
\begin{equation*}
\begin{split}
& L \ell = L^{2(2+\delta)/(3+2\delta)} \\
& L^2 \ell^{-2(1+\delta)} = L^{2(2+\delta)/(3+2\delta)} \\
& L \ell^{-(1 + \delta)} = L^{(2+\delta)/(3+2\delta)} 
\end{split}
\end{equation*}
This valid choice for $\ell$ since $\delta > 0$ so $\frac{1}{3+2\delta} < 1$. With this choice of $\ell$, we have that
\begin{align*}
\| \chi_{L} (P - P_L) \|_{2} \leq C_1 (L^{-\delta/(3+2\delta)} + 2 L^{(2+\delta)/(3+2\delta)})^{1/2} + C_2 L^{(2+\delta)/(3+2\delta)} + C_3 L^{(2+\delta)/(3+2\delta)}
\end{align*}
Hence, for $L \geq 1$
\[
\| \chi_{L} P - P_L \|_{\fS_2} \lesssim L^{(2+\delta)/(3+2\delta)}.
\]
The proof is completed by observing that for all $\delta \geq 0$:
\[
\frac{2 + \delta}{3 + 2 \delta} \leq \frac{2}{3}.
\]

\section{Proof of Proposition \ref{prop:PL-chern}}
\label{sec:PL-chern}
For this proof, let us abbreviate the commutator in the definition of the Chern marker as $\cC$, that is:
\[
\cC := \Big[ [ X, P ], [ Y, P ] \Big].
\]
With this notation, we have that:
\begin{align*}
    \chi_{L} P \cC P \chi_{L} - P_L \cC P_L
    & = (\chi_{L} P - P_L) \cC P \chi_{L} + P_L \cC (P \chi_{L} - P_L)
\end{align*}
Applying H{\"o}lder's inequality to the trace norm we want to bound, we have that
\begin{align*}
\| \chi_{L} P & \cC P \chi_{L} - P_L \cC P_L \|_{\fS_1} \\
& \leq \| (\chi_{L} P - P_L) \cC P \chi_{L} \|_{\fS_1} + \| P_L \cC (P \chi_{L} - P_L) \|_{\fS_1} \\
& \leq \|\chi_{L} P - P_L \|_{\fS_2} \| \cC \|_{\fS_{\infty}} \| P \chi_{L} \|_{\fS_2} + \| P_L \|_{\fS_2} \| \cC \|_{\fS_{\infty}}  \| P \chi_{L} - P_L \|_{\fS_2}.
\end{align*}
The right hand side can be upper bounded by observing that
\begin{enumerate}[label=(\roman*)]
\item Since $P$ admits an exponentially localized kernel, 
\[
\| \cC \|_{\fS_{\infty}} = \| \cC \| = \| [[X,P], [Y,P]] \| \leq 2 \| [X, P] \| \| [Y, P] \| \lesssim 1.
\]
\item Additionally, since $P$ admits an exponentially localized kernel, one easily checks that
\[
\| P \chi_{L} \|_{\fS_2} \lesssim L.
\]
\item Since $\rank{(P_L)} \leq 4 L^2$ and $\| P_L \| \leq 1$ we have that
\[
\| P_{L} \|_{\fS_2} \leq 2L.
\]
\item Proposition \ref{prop:chiP-PL-approx} implies that
\[
\| \chi_{L} P - P_L \|_{\fS_2} \lesssim L^{2/3}
\]
\end{enumerate}
Combining these four bounds, we conclude that
\[
\| \chi_{L} P \cC P \chi_{L} - P_L \cC P_L \|_{\fS_1} \lesssim L^{5/3}
\]
Hence
\[
\lim_{L \rightarrow \infty} \frac{1}{L^2} \| \chi_{L} P \cC P \chi_{L} - P_L \cC P_L \|_{\fS_1} = 0
\]
and the proposition is proved.

\section{Proof of Proposition \ref{prop:p_x_pl_bd}}
\label{sec:p_x_pl_bd}
Our main goal in this section is to show that the following quantity is $o(L^2)$:
\[
\| (P - P_L) X P_L \|_{\fS_2}^2 
\]
The corresponding bound for $Y$ follows by an analogous argument. 

Similar to the proof of Proposition \ref{prop:PL-chern}, our first step will be to introduce a length parameter $\ell \in [1, \frac{1}{2} L)$ to be fixed later. For any such choice of $\ell$, by the properties of the Hilbert-Schmidt norm we have that:
\begin{equation}
\label{eq:p_x_pl_bd1}
\| (P - P_L) X P_L \|_{\fS_2}^2 = \| (P - P_L) X P_{L - 2\ell} \|_{\fS_2}^2 + \| (P - P_L) X (P_L - P_{L - 2\ell}) \|_{\fS_2}^2
\end{equation}
The second of these terms can be shown to be $O(L\ell)$ using that $(P - P_L) (P_L - P_{L - 2\ell}) = 0$. In particular:
\begin{align*}
\| (P - P_L) X (P_L - P_{L - \ell}) \|_{\fS_2}^2
& = \sum_{L - 2\ell < |\vec{m}|_\infty \leq L} \| (P - P_L) X \psi_{\vec{m}} \|^2 \\
& = \sum_{L - 2\ell < |\vec{m}|_\infty \leq L} \|  (P - P_L) (X - m_1) \psi_{\vec{m}} \|^2 \\
& \leq \left( \sup_{\vec{m}} \| (X - m_1) \psi_{\vec{m}} \|^2 \right) \sum_{L - 2\ell < |\vec{m}|_\infty \leq L} 1 \\
& \lesssim L \ell.
\end{align*}
Therefore, this term is $o(L^2)$ so long as we choose $\ell = o(L)$. 

Returning to the first term in Equation \eqref{eq:p_x_pl_bd1}, using that $(P - P_L) P_{L - 2\ell} = 0$ we have that
\begin{align*}
\| (P - P_L) X P_{L - 2\ell} \|_{\fS_2}^2
& = \sum_{|\vec{m}|_\infty \leq L - 2\ell} \| (P - P_L) (X - m_1) \psi_{\vec{m}} \|^2
\end{align*}
We will now show that each of the terms in the above sum are small using Proposition \ref{prop:far-bd}. In particular, we have the following easy lemma
\begin{lemma}
\label{lem:p_x_pl_bd_lem}
If $\psi_{\vec{m}}$ is $(1+\delta)$-localized with center point $\vec{m}$ where $|\vec{m}|_{\infty} \leq L - 2\ell$, then
\[
\| (P - P_L) (X - m_1) \psi_{\vec{m}} \| \lesssim \ell^{-\delta} + L \ell^{-(1+\delta)}
\]
\end{lemma}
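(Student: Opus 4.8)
The plan is to split $(X - m_1)\psi_{\vec m}$ into its ``near'' part, supported inside a box of side roughly $|\vec m|_\infty + \ell$, and its ``far'' tail, and then to bound the action of $(P - P_L)$ on each piece using respectively Proposition \ref{prop:far-bd} and the crude decay of $(X-m_1)\psi_{\vec m}$. Concretely, fix a cutoff scale: since $|\vec m|_\infty \le L - 2\ell$, the box $[-(L-\ell),L-\ell)^2$ contains $B_{\ell}(\vec m)$ with room to spare, so I would write $(X-m_1)\psi_{\vec m} = \chi_{L-\ell}(X-m_1)\psi_{\vec m} + (1-\chi_{L-\ell})(X-m_1)\psi_{\vec m}$ and estimate
\[
\| (P - P_L)(X-m_1)\psi_{\vec m}\| \le \| (P-P_L)\chi_{L-\ell}(X-m_1)\psi_{\vec m}\| + \| (1-\chi_{L-\ell})(X-m_1)\psi_{\vec m}\|,
\]
using $\|P - P_L\| \le 1$ on the second term.

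For the first (near) term, I would factor $(P - P_L)\chi_{L-\ell} = (P - P_L)\chi_{L-\ell}$ and note $(P-P_L) = (P - P_L)$ is an orthogonal projector onto a subspace of $\range(P)$, so $\|(P-P_L)\chi_{L-\ell} f\| \le \|(P - P_L)\chi_{L-\ell}\| \, \|f\|$ is too lossy; instead the right move is to observe that since $\psi_{\vec m}\in\range(P)$ we have $(P - P_L)\psi_{\vec m} = (P - P_{L})\psi_{\vec m}$, but more usefully $(P - P_L)\chi_{L-\ell}g = (P-P_L)\chi_{L-\ell} g$ where we use that $\|(P - P_L)\chi_{L-\ell}\|_{\fS_\infty} \le \|\chi_{L-\ell}(P - P_L)\|_{\fS_\infty} \le \|\chi_{L-\ell}(P - P_{(L-\ell)+\ell})\|_{\fS_\infty}$ — wait, that is exactly $\|\chi_{L-\ell}(P - P_L)\|$, and Proposition \ref{prop:far-bd} with $a = L-\ell$, $b = \ell$ controls its Hilbert--Schmidt norm by $(\ell^{-\delta} + L\ell^{-(1+\delta)})^{1/2}$, hence also its operator norm. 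Then $\|(P-P_L)\chi_{L-\ell}(X-m_1)\psi_{\vec m}\| \le \|\chi_{L-\ell}(P-P_L)\|\,\|(X-m_1)\psi_{\vec m}\| \lesssim (\ell^{-\delta}+L\ell^{-(1+\delta)})^{1/2}$, using that $\|(X-m_1)\psi_{\vec m}\|$ is bounded uniformly in $\vec m$ (it is controlled by the $(1+\delta)$-localization, which certainly implies a uniform bound on the first moment).

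For the second (far) term, $(1-\chi_{L-\ell})(X-m_1)\psi_{\vec m}$ is supported where $|x_1|\ge L-\ell$ or $|x_2|\ge L-\ell$, i.e.\ at distance at least $(L-\ell) - |\vec m|_\infty \ge \ell$ from the center $\vec m$; on that region $\la x_j - m_j\ra^{1+\delta} \gtrsim \ell^{1+\delta}$ and $|x_1 - m_1| \le \la x - \vec m\ra$, so pulling out two powers of the Japanese bracket (one to eat the factor $|x_1 - m_1|$, the rest to gain decay) gives $\|(1-\chi_{L-\ell})(X-m_1)\psi_{\vec m}\|^2 \lesssim \ell^{-2\delta}\int \la x - \vec m\ra^{2(1+\delta)}|\psi_{\vec m}|^2 \lesssim \ell^{-2\delta}$, which is dominated by $\ell^{-\delta}$. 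Combining, $\|(P-P_L)(X-m_1)\psi_{\vec m}\| \lesssim (\ell^{-\delta}+L\ell^{-(1+\delta)})^{1/2} + \ell^{-\delta} \lesssim \ell^{-\delta/2} + L^{1/2}\ell^{-(1+\delta)/2}$; since ultimately this is summed over $O(L^2)$ lattice points and squared, the square-root form suffices and matches the stated bound up to harmless rewriting (the lemma as stated asserts $\ell^{-\delta} + L\ell^{-(1+\delta)}$ for the un-squared quantity, which one gets directly by keeping $\|\chi_{L-\ell}(P-P_L)\|_{\fS_2}^2$ and the far-tail square estimate without taking square roots prematurely). The main obstacle is purely bookkeeping: making sure the cutoff radius $L-\ell$ is simultaneously (i) far enough from every admissible center $\vec m$ that the far tail gains the full $\ell^{-\delta}$, and (ii) close enough to $L$ that Proposition \ref{prop:far-bd} is applied with $b = \ell$ rather than a smaller gap; the choice $a = L - \ell$, $b = \ell$ does both, and no genuinely hard estimate is needed beyond the two technical propositions already established.
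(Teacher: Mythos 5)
Your proof is correct and follows essentially the same route as the paper: insert $\chi_{L-\ell} + (1-\chi_{L-\ell})$, bound the near piece by the operator norm of $(P-P_L)\chi_{L-\ell}$ via Proposition~\ref{prop:far-bd} with $a=L-\ell$, $b=\ell$, and bound the far piece using the $(1+\delta)$-moment of $\psi_{\vec m}$ exactly as in Proposition~\ref{prop:near-bd}. Your remark about the square root is apt --- the argument really yields $(\ell^{-\delta}+L\ell^{-(1+\delta)})^{1/2}$ for the near term rather than the unsquared expression the lemma states, but this discrepancy is harmless downstream since both quantities vanish under the choice $\ell = L^{2/(2+\delta)}$.
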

\begin{proof}
We start by inserting $\chi_{L - \ell} + (1 -\chi_{L - \ell})$ and applying triangle inequality
\begin{align*}
\| (P - P_L) & (X - m_1) \psi_{\vec{m}} \| \\
& \leq \| (P - P_L) \chi_{L - \ell} (X - m_1) \psi_{\vec{m}} \| +  \| (P - P_L) (1 -\chi_{L - \ell}) (X - m_1) \psi_{\vec{m}} \|
\end{align*}
By Proposition \ref{prop:far-bd} we have that
\[
\| (P - P_L) \chi_{L - \ell} (X - m_1) \psi_{\vec{m}} \| \leq \| (P - P_L) \chi_{L - \ell} \| \| (X - m_1) \psi_{\vec{m}} \| \lesssim \ell^{-\delta} + L \ell^{-(1+\delta)}.
\]
By repeating a similar argument as used in the proof of Proposition \ref{prop:near-bd} it is easily verified that
\begin{align*}
\| (P - P_L) (1 -\chi_{L - \ell}) (X - m_1) \psi_{\vec{m}} \| \leq  \| (1 -\chi_{L - \ell}) (X - m_1) \psi_{\vec{m}} \| \lesssim \ell^{-\delta}
\end{align*}
which proves the lemma.
\end{proof}
Using Lemma \ref{lem:p_x_pl_bd_lem} and it follows that:
\[
\| (P - P_L) X P_{L - \ell} \|_{\fS_2}^2 \lesssim L^2 (\ell^{-\delta} + L \ell^{-(1+\delta)})^{2}.
\]
For our final step, we will choose $\ell = L^{2/(2+\delta)}$. Since $\delta > 0$, note that this choice of $\ell$ is $o(L)$ consistent with our previous requirement. Furthermore, we have that
\[
\| (P - P_L) X P_{L - \ell} \|_{\fS_2}^2 \lesssim L^2 (L^{-2\delta/(2+\delta)} + L^{-\delta/(2+\delta)})^2.
\]
Since $\delta > 0$, we conclude that
\[
\lim_{L \rightarrow \infty} \frac{1}{L^2} \| (P - P_L) X P_{L - \ell} \|_{\fS_2}^2 = 0
\]
which completes the proof.
\bibliographystyle{plain}
\bibliography{topological}

\end{document}